\providecommand{\algorithmname}{Algorithm}
\theoremstyle{plain}
\theoremstyle{plain}
\theoremstyle{plain}
\theoremstyle{plain}
\newtheorem{thm}{\protect\theoremname}
\author{
Wenjie~Liu,~\IEEEmembership{Graduate~Student~Member,~IEEE,}
and~Panos~Papadimitratos,~\IEEEmembership{Fellow,~IEEE}\\
\thanks{W. Liu and P. Papadimitratos are with the Networked Systems Security Group, KTH Royal Institute of Technology, 114 28 Stockholm, Sweden.}
\thanks{Corresponding author: Wenjie Liu (e-mail: \textit{wenjieli@kth.se}).}
\thanks{This work was supported in part by the SSF SURPRISE cybersecurity project, the Security Link strategic research center, and the China Scholarship Council. The computations were enabled by resources provided by the National Academic Infrastructure for Supercomputing in Sweden (NAISS), partially funded by the Swedish Research Council through grant agreement no. 2022-06725. We would also like to acknowledge the work of the organizers of Jammertest 2024.}
}
\newcommand{\newac}{\newacronym}
\newcommand{\ac}{\gls}
\newcommand{\Ac}{\Gls}
\newcommand{\acpl}{\glspl}
\newcommand{\Acpl}{\Glspl}
\providecommand{\corollaryname}{Corollary}
\providecommand{\lemmaname}{Lemma}
\providecommand{\propositionname}{Proposition}
\providecommand{\theoremname}{Theorem}
\newcommand{\revaddone}[1]{#1}
\definecolor{mycolor1}{rgb}{0.494117647058824,0.184313725490196,0.556862745098039}
\definecolor{mycolor2}{rgb}{0.466666666666667,0.674509803921569,0.188235294117647}
\definecolor{mycolor3}{rgb}{0.301960784313725,0.745098039215686,0.933333333333333}
\definecolor{mycolor4}{rgb}{0.929411764705882,0.694117647058824,0.125490196078431}
\definecolor{mycolor5}{rgb}{0.635294117647059,0.078431372549020,0.184313725490196}
\definecolor{mycolor6}{rgb}{0.8500,0.3250,0.0980}
\begin{document}
\title{GNSS Spoofing Detection Based on Opportunistic Position Information}

\maketitle


\setlength\parskip{0pt}

\begin{abstract}
The limited or no protection for civilian \ac{gnss} signals makes spoofing attacks relatively easy. With modern mobile devices often featuring network interfaces, state-of-the-art \ac{sop} schemes can provide accurate network positions in replacement of \ac{gnss}. The use of onboard inertial sensors can also assist in the absence of \ac{gnss}, possibly in the presence of jammers. The combination of \ac{sop} and inertial sensors has received limited attention, \revaddone{yet it shows} strong results \revaddone{on} fully custom-built platforms. We do not seek to improve such special-purpose schemes. Rather, we focus on countering \ac{gnss} attacks, notably detecting them, with emphasis on deployment with consumer-grade platforms, notably smartphones, that provide \emph{off-the-shelf opportunistic information} (i.e., network position and inertial sensor data). Our Position-based Attack Detection Scheme (PADS) is a probabilistic framework that uses regression and uncertainty analysis for positions. The regression optimization problem is a weighted mean square error of polynomial fitting, with constraints that the fitted positions satisfy the device velocity and acceleration. Then, uncertainty is modeled by a Gaussian process, which provides more flexibility to analyze how sure or unsure we are about position estimations. In the detection process, we combine all uncertainty information with the position estimations into a fused test statistic, which is the input utilized by an anomaly detector based on outlier ensembles. The evaluation shows that the PADS outperforms a set of baseline methods that rely on \ac{sop} or inertial sensor-based or statistical tests, achieving up to 3 times the true positive rate at a low false positive rate.
\end{abstract}

\begin{IEEEkeywords}
GNSS attack detection, secure localization, opportunistic information
\end{IEEEkeywords}

\glsresetall

\section{Introduction}
\Acpl{gnss} face a wide range of attack threats, with spoofing being particularly concerning, as it allows adversaries to manipulate the \ac{gnss} position and time. Real-world incidents, such as disrupting sensor fusion algorithms to cause crashes in autonomous vehicles \cite{SheWonCheChe:C20} and misnavigation of luxury yachts \cite{PsiHumSta:J16}, highlight the \revaddone{increasing threat} of \ac{gnss} attacks \cite{Goo:J22}. The \revaddone{rising} sophistication and accessibility of \ac{gnss} spoofing technology further intensifies these concerns \cite{PsiHum:B21}. In response, various standalone strategies have been proposed to counter and identify attacks, ranging from the implementation of cryptographic protocols \cite{WesRotHum:J12,FerRijSecSim:J16,AndCarDevGil:C17} to signal-level mechanisms \cite{PapJov:C08,BroJafDehNie:C12,ZhaTuhPap:C15,LiuCheYanShu:C21}. Cryptographic solutions require extensive updates to satellites and receivers, which are cost-prohibitive and hard to implement at scale. \revaddone{Signal-level solutions, relying on characteristics such as \ac{aoa}, may require specialized hardware (e.g., antenna arrays) and may be ineffective in challenging environments, such as urban canyons with significant multipath \cite{PapJov:C08,BroJafDehNie:C12}. Moreover, many consumer-grade devices do not integrate \ac{gnss} receiver with strong anti-jamming/spoofing features or do not expose the necessary low-level signal data (e.g., phase, correlations) through standard operating system APIs \cite{Gps:J24}.} 

Modern consumer-grade mobile platforms, such as smartphones and autonomous vehicles, offer a promising opportunity for \ac{gnss} spoofing detection. These devices are equipped with opportunistic information sources, including connectivity (i.e., Wi-Fi, cellular, Bluetooth signals) and onboard sensors such as \acpl{imu}. This combination of data sources enables robust attack detection by cross-checking \ac{gnss} \revaddone{positions} with alternative positioning methods. However, leveraging these data sources for spoofing detection is nontrivial due to their inherent limitations: network-based positions often have larger fluctuation errors than \ac{gnss} and \ac{imu}, while \ac{imu}-based systems suffer from cumulative error \cite{OliSciIbrDip:C19,SheWonCheChe:C20,GaoLi:J22}. Therefore, effectively integrating these information sources into a unified framework for spoofing detection remains an ongoing topic that requires further investigation. 

An increasingly explored approach involves leveraging \ac{sop} and \ac{imu} for \ac{gnss}-denied environments or spoofing detection \cite{KhaRosLanCha:C14,CecForLauTom:J21,MorKas:J21,MaaKas:J21,KasKhaAbdLee:J22,OliSciIbrDip:J22,GaoLi:J23b,KasKhaKhaLee:J24,BaiSunDemZha:J24}. For instance, \cite{OliSciIbrDip:J22,BaiSunDemZha:J24} utilize \ac{sop}, such as Wi-Fi or cellular signals, to localize devices or detect deviations from \ac{gnss}-provided positions, typically employing threshold-based binary decision models. \cite{KhaRosLanCha:C14,CecForLauTom:J21} incorporate \ac{imu} data for trajectory analysis to identify inconsistencies indicative of spoofing. \cite{MorKas:J21,MaaKas:J21,KasKhaAbdLee:J22,KasKhaKhaLee:J24} use both \ac{sop} and \ac{imu} with sophisticated custom hardware platforms to present an accurate alternative positioning result, although they do not look for the detection per se. While these methods have shown promise for localization in \ac{gnss}-denied environments, they are not readily applicable on consumer-grade platforms, e.g., smartphones. Hence, \revaddone{we direct} efforts towards the detection of \ac{gnss} attacks with the use of all available inputs (network positions and onboard sensor data) and a hardware-agnostic design. 

We propose a \emph{Position-based Attack Detection Scheme (PADS)} that integrates network-based positioning results and velocity, acceleration, and orientation from the onboard sensors, termed \emph{off-the-shelf opportunistic information}, within a probabilistic detection framework for \revaddone{widely-used} consumer-grade mobile devices. The key idea involves accounting for various types of noisy position sources with different update rates, the movement of the \ac{gnss} receiver, and designing a test statistic. The construction of the test statistic involves two steps, which we refer to as a combo of model-based and data-driven techniques. First, we establish a closed-form solution to describe the relationship between motion and position data, resulting in a motion-constrained regression problem. This connects short-term estimation via the receiver's motion with long-term estimation through network-based positions, effectively smoothing positions with motion model constraints. Second, we employ Gaussian process regression \cite{RasWil:B05} to model the uncertainty inherent in the smoothed positions. Then, we calculate a weighted sum of both positions and uncertainties into a unified Gaussian function for the ensemble-based anomaly detection \cite{Pev:J16,ZhaNasLi:J19}. \revaddone{Crucially, PADS is designed as a software-based detection layer that utilizes opportunistic information commonly accessible in modern platforms, making it deployable without requiring specialized hardware or low-level signal access.}

Building on our earlier work \cite{LiuPap:C23}, we incorporate a learning-based detector into the decision-making part. Furthermore, we apply our PADS to an existing dataset \cite{SheWonCheChe:C20} in enhanced network simulations of terrestrial infrastructures with real-world Wi-Fi and cellular layouts, and a newly collected dataset from consumer-grade Android phones under real \ac{gnss} attacks. Additionally, we present improved theoretical and experimental analyses to further assess the effectiveness of the scheme. 

The main contribution of this work is a \textbf{probabilistic framework for \ac{gnss} attack detection}: We combine network-based positions and onboard sensors within a probabilistic framework for \ac{gnss} position attack detection. Our \emph{Position-based Attack Detection Scheme (PADS)} can fuse position, velocity, acceleration, and orientation data from various sources with different accuracies and update rates. It provides a robust and interpretable detection outcome, as well as a recovered position using benign \ac{gnss}, networks, and onboard sensors.

In PADS, we contribute a formulation of \revaddone{a} \textbf{motion-constrained regression problem for position smoothing} that combines short-term trajectory smoothing via \ac{imu} with long-term stabilization from network-based positions. This fusion reduces position noise and prevents \ac{imu} drift. A Gaussian process regression is then employed to quantify positional uncertainty. They are computationally efficient, with polynomial-time complexity, and mathematical proofs are presented. In addition, we design \textbf{anomaly detection via a weighted test statistic} that incorporates position trajectories and their uncertainties, enabling unsupervised anomaly detection. This hyperparameter-free approach uses ensemble methods to detect spoofing attacks and is also extensible and compatible with signal-level detection methods by incorporating signal properties, such as Doppler shift. 

We also contribute a \textbf{comprehensive evaluation on consumer-grade platforms} for \ac{gnss} attack detection. First, we evaluate with the help of a simulated autonomous driving platform. Second, we experiment with various Android smartphones, including different brands, prices, and chips (from Exynos, MTK, Qualcomm, and Google Tensor). \ac{gnss} attack strategies include a variety of meaconing and spoofing, incurring gradual deviation or position jumping, notably in a real-world setting, with data collected in Jammertest 2024. 

The rest of the paper is organized as follows: Sec.~\ref{relwor} provides background and reviews related work on \ac{gnss} attacks, detection, and network-based positioning. Sec.~\ref{sysmod} presents our system model and adversary. Sec.~\ref{profor} and \ref{prosch} detail the problem formulation and the proposed PADS. Sec.~\ref{numres} discusses evaluation and comparison with baselines. \revaddone{Finally, Sec.~\ref{conclu} concludes the paper.}

\section{Related Work and Background}
\label{relwor} 
\subsection{GNSS Attack and Detection}
\Ac{gnss} spoofing attacks typically craft fraudulent signals with precise power and format as per \ac{gnss} protocols. Before spoofing, the attacker may first employ jamming to deliberately disrupt \ac{gnss} signals, causing the victim to lose \revaddone{the} \ac{gnss} signal lock \cite{KasKhaAbdLee:J22}. Alternatively, with more sophisticated strategies, the attacker may gradually amplify the spoofing signal, eventually tricking the victim to follow it \cite{SheWonCheChe:C20}. Meaconing, the easiest method of spoofing signal generation, involves retransmitting rightful satellite signals from a different area. When it comes to authenticated \ac{gnss} signals, relay or replay attacks \cite{LenSpaPap:C22} can transmit satellite signals using low-complexity setups to deceive the victim into trusting the information. Another more sophisticated modification, known as selective delay \cite{PapJov:C08}, rebroadcasts separate satellite signals, allowing for modification of the position solution according to the attack scenario. Distance-decreasing (DD) attacks \cite{ZhaPap:C19b} provide additional options for adversaries, employing Early Detection and Late Commit to relay the \ac{gnss} signal, thereby making the relayed one appear to arrive earlier than it would have. 

\revaddone{Standalone detections often focus on analyzing the physical characteristics of \ac{gnss} signals, e.g., Doppler effect, \ac{aoa}, \ac{snr}, and \ac{rss} \cite{PapJov:C08,BroJafDehNie:C12,LiuCheYanShu:C21,ZhoLvDenKe:J22,SpaGeiPanPap:J25}. Recent advances include leveraging signal quality monitoring \cite{ZhoLvLiJia:J24} or machine learning on signal features \cite{IqbAmaSik:J24}. These methods can be effective against attacks that cause signal distortions, but mostly can not provide an alternative positioning. Moreover, access to the necessary low-level measurements (e.g., precise phase, \ac{aoa}) is often limited by hardware and standard APIs \cite{Gps:J24}. For example, \ac{aoa} typically requires specialized multi-antenna hardware. Furthermore, multipath in urban areas can also cause signal anomalies. Sophisticated attackers can also employ strategies such as slowly varying spoofing \cite{SheWonCheChe:C20,GaoLi:J22} or imitating \ac{aoa} that minimizes abrupt changes, making detection based solely on signals difficult.} 

Increasingly, \revaddone{low-end} \ac{gnss} receivers are integrated into mobile platforms that also feature onboard sensors and network interfaces. This presents opportunities for leveraging the sensors and networks to enhance attack detection. For instance, \cite{MicForCenTom:J22} focuses on \acpl{uav} to fuse \ac{gnss} with \ac{imu} data. Then, they use relative distance information obtained from \ac{rss} data to detect spoofing, with alternative positions based on multi-agent systems enhancing navigation robustness under spoofing conditions. \ac{sop} from terrestrial network infrastructures can assess \ac{gnss}-provided positions \cite{OliSciIbrDip:J22,BaiSunDemFen:J24}. This involves assuming adequate network scanning and the availability of \ac{bs} or \ac{ap} positions, using \ac{rss} or \ac{toa} as the distance measure between the mobile platform and the station to estimate position for checking \ac{gnss}. Additionally, in \cite{KhaRosLanCha:C14}, an \ac{ekf} integrates \ac{gnss} and \ac{imu} data, with \ac{raim} to assist with spoofing detection. Combined metric-based approach \cite{RotCheLoWal:J21} integrates multiple detection features such as autocorrelation distortion, \ac{rss}, pseudorange, carrier phase difference, and \ac{aoa} to enhance detection. 

Without considering \revaddone{the} detection of \ac{gnss} attack, \cite{MorKas:J21,MaaKas:J21,KasKhaAbdLee:J22,KasKhaKhaLee:J24} fuse \ac{sop} with \ac{imu} to provide position and navigation with great accuracy in \ac{gnss}-denied environments. The work considers and experiments with different grades of \acpl{imu}, types of devices, cellular clock errors, pseudorange measurement models, unknown transmitter locations, etc. However, most of \revaddone{these APIs} or information are still unavailable on consumer-grade platforms, e.g., smartphones. 

\subsection{Network-based Positioning}
In addition to the conventional dependence on \ac{gnss} for positioning, the network infrastructures, such as Wi-Fi, cellular, Bluetooth, and eLoran \cite{Gow:J20}, can also provide alternatives or backups for accurate localization. They play an important role in scenarios where \ac{gnss} signals may be limited or unavailable \cite{KasKhaAbdLee:J22}. 
Our focus is not to incorporate these positioning techniques into our framework, but to use off-the-shelf network-based positions to enhance the robustness and reliability of the detection process. 

Fingerprinting methods \cite{VoDe:J15,XinChe:J23,HuaLiuJiaDai:J23} are commonly used where databases of pre-collected fingerprints (\ac{rss}, magnetic field values, channel state information, or even visual information) are compiled. After that, deterministic or probabilistic fingerprint-matching algorithms are used for localization. They provide supplementary information for positioning or offering validations of \ac{gnss} attacks. However, fingerprint database construction is time-consuming, especially for wide-open outdoor environments. Hence, fingerprint-based positioning is often limited to a small area. 

Range-based methods \cite{LaoMorKimLee:J18,Mozilla2023} make use of various inputs such as \ac{rss}, propagation time, or \ac{aoa} to derive pseudoranges, which are then utilized for multilateration. Recent advances in network-\ac{gnss} hybrid positioning \cite{BaiSunDemZha:J22,SonDinZhaChe:J25} rely on the ranging information (e.g., \ac{toa} of 5G millimeter-wave) or localization results to provide additional observations in \revaddone{the} \ac{ekf} framework. 


\section{System Model}
\label{sysmod}
We consider a mobile \ac{gnss}-enabled platform \revaddone{that provides} computational power, heterogeneous network infrastructures and diverse sensors. At time $t$, the actual platform position, denoted as $\mathbf{p}_{\text{c}}(t) \in \mathbb{R}^2$, needs to be estimated based on positioning. $\mathbf{p}_0(t)$ represents the \ac{gnss} position at time $t$. Wi-Fi and/or cellular networks can provide positions, $\mathbf{p}_m(t)$, based on network-based positioning algorithms (e.g., \cite{MagLunPat:J15,ShaKas:J21,Mozilla2023}), where $m=1,2,...,M$ and $M$ signifies the number of network interfaces; $t$ spans from $1$ to $N$ seconds, with $N$ as the total number of time indexes. \Acpl{imu} provide multi-axis acceleration measurements, while velocity is possibly obtained from wheel speed sensors (e.g., in autonomous vehicles), denoted as velocity, $\mathbf{v}(t)$, acceleration, $\mathbf{a}(t)$, and orientation, $\boldsymbol{\omega}(t)$. Continuous network connectivity is not required, as some infrastructure might be temporarily inaccessible \revaddone{for} reasons independent of the mobile device itself. We assume that positioning errors of benign \ac{gnss}, Wi-Fi, and cellular networks are zero-mean Gaussian random variables \cite{MiuHsuCheKam:J15,JanTipPop:C16,SchGriTreHof:C03}. As the mobile platform moves along a path in benign conditions (as illustrated in Fig.~\ref{fig:locations}), the \ac{gnss}-derived position aligns with the opportunistic position information.

\begin{figure}
\begin{centering}
\includegraphics[width=\columnwidth]{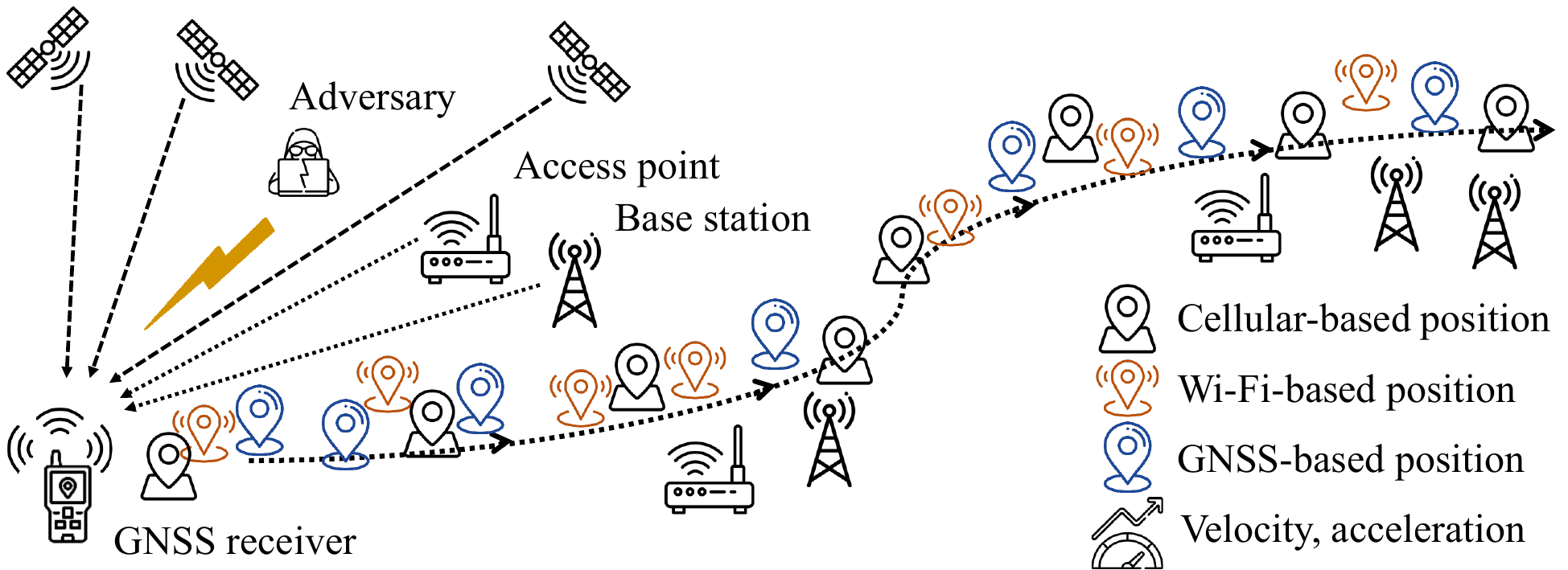}
\par\end{centering}
\caption{Illustration of position information from \ac{gnss} and network infrastructures, motion information from onboard sensors, and external adversary.}
\label{fig:locations}
\end{figure}

\textbf{Adversary:} Spoofed, relayed, or replayed \ac{gnss} satellite signals manipulate the mobile platform into falsely estimating the position. We assume that the attacker can observe the victim position and has access to \acpl{sdr} equipped with \ac{gnss} spoofing capabilities to falsify the signals. As \ac{gnss} signals are \revaddone{low power}, the attack can compel the victim to lose the lock on legitimate signals and then acquire the lock on adversarial signals. We remain neutral regarding the exact details of the attack, and do not restrict the type of attacker as long as it achieves the malicious alteration of position. In other words, as the \ac{gnss} solution includes position and time, detecting alterations solely in time, without corresponding changes in position, falls beyond the scope of our investigation. The attacker can skillfully design the victim's spoofed positions, considering its actual path. Certain trajectory designs, such as path drift \cite{NarRanNou:C19} and gradual deviation \cite{SheWonCheChe:C20}, remain almost undetectable for a period following the initiation of the attack (e.g., Kalman filter-based detection). 

The attacker considered here operates exclusively within the \ac{gnss} realm. \revaddone{We assume the opportunistic information sources (network-based position obtained via platform services, onboard sensor data) remain unaffected by the \ac{gnss} spoofing attack itself. This assumption is grounded in the practical separation of these systems (e.g., network signals are often authenticated or encrypted, and \ac{imu} sensors cannot be affected unless the device itself is compromised). Therefore, the scope of this work focuses on detecting \ac{gnss} attacks using trustworthy opportunistic data. Wi-Fi spoofing, cellular base station simulation, and physical sensor manipulation are considered beyond this work.} However, we assume that the adversary can extend periods of unavailability for network-based positioning by interfering with wireless networks. Furthermore, it is assumed that the attacker does not exert physical control over the victim, thereby preventing manipulation of the process for deriving position information from various network interfaces and onboard sensors. In short, the adversarial actions are confined to \ac{gnss} spoofing, while wireless networks may experience interference. As a result, during a spoofing attack, the \ac{gnss}-derived position should diverge from the actual position and not match the opportunistic position information. 

\section{Problem Formulation}
\label{profor}
Our objective is to assess the consistency between the \ac{gnss}-derived position and opportunistic position information from $\{\mathbf{p}_m(t),{\mathbf{v}}(t),{\mathbf{a}}(t),\boldsymbol{\omega}(t) \}$ to determine whether the current \ac{gnss} position is indicative of an attack. By evaluating the probability of a \ac{gnss} position attack, we seek to maximize the true positive rate of detection. Additionally, we aim for a detection scheme that remains reliable even if certain types of opportunistic information are unavailable. 

For the detection of \ac{gnss} position attacks at a given time $t$ and with $M$ network interfaces, we use data $\{\mathbf{p}_m(i),{\mathbf{v}}(i),{\mathbf{a}}(i),\boldsymbol{\omega}(i) \}$ for $0<i<t$ and $m \in \{0,1,...,M\}$ to determine whether $\mathbf{p}_0(t)$ is subject to an attack. Two corresponding hypotheses are presented as follows: 
\begin{itemize}
    \item $\mathcal{H}_0$: $\mathbf{p}_0(t)$ is not under attack;
    \item $\mathcal{H}_1$: $\mathbf{p}_0(t)$ is subject to attack.
\end{itemize}
Then, the decision made at the time $t$ is denoted as $\hat{\mathcal{H}}(t)\in \{\mathcal{H}_0,\mathcal{H}_1\}$. The true positive is expressed as $\hat{\mathcal{H}}(t)=\mathcal{H}_1$ under attack ($\mathcal{H}_1$), and the false positive is $\hat{\mathcal{H}}(t)=\mathcal{H}_1$ under $\mathcal{H}_0$. The true positive rate for $0<t \le N$ is 
\begin{equation}
    R_\text{TP} (\hat{\mathcal{H}}(t))=\mathbb{P}[\hat{\mathcal{H}}(t)=\mathcal{H}_{1}|\mathcal{H}_{1}].
\end{equation}
The false positive rate is 
\begin{equation}
    R_\text{FP} (\hat{\mathcal{H}}(t))=\mathbb{P}[\hat{\mathcal{H}}(t)=\mathcal{H}_{1}|\mathcal{H}_{0}].
\end{equation}

We define the \emph{detection time delay}, $\Delta T$, as the interval between the moment the alarm is raised and the start of the attack: 
\begin{multline}
    \Delta T=\min\left\{ t\Bigm|\mathbb{I}\{\hat{\mathcal{H}}(t)=\mathcal{H}_{1}|\mathcal{H}_{1}\}=1\right\} \\
    -\min\left\{ t\Bigm|\mathbb{I}\{\hat{\mathcal{H}}(t)=\mathcal{H}_{0}|\mathcal{H}_{1}\}=1\right\} 
\end{multline}
where indicator function $\mathbb{I}\{A|B\}$ equals to 1 when condition $A$ is satisfied given condition $B$.

The problem is to: (a) maximize $R_\text{TP}$ when fixing $R_\text{FP}$, (b) minimize $\Delta T$, and (c) provide a probability of \revaddone{being} under \ac{gnss} position attack, along with a recovered position that fuses opportunistic information to replace \ac{gnss} position when $\hat{\mathcal{H}}(t)=\mathcal{H}_1$.

\section{Proposed Scheme}
\label{prosch}
\begin{figure*}
\begin{centering}
\includegraphics[trim={0 0 0.6cm 0},clip,width=1.5\columnwidth]{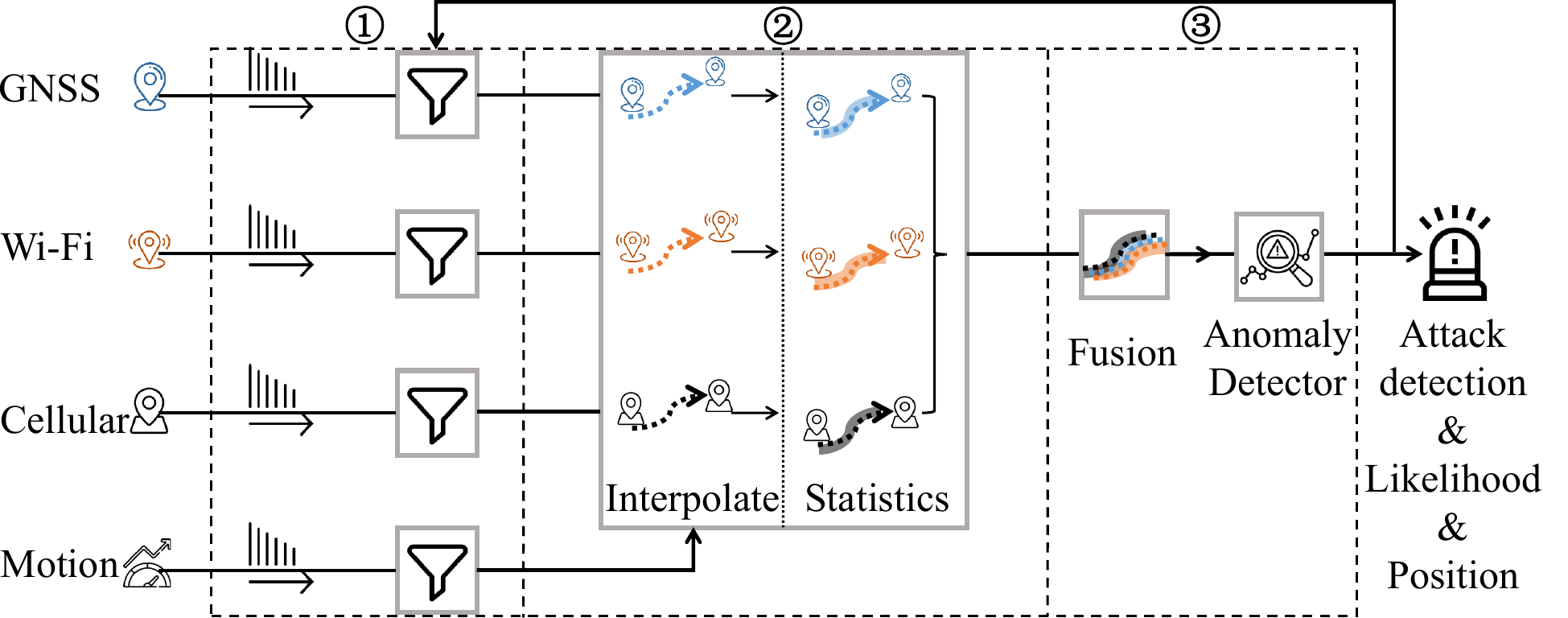}
\par\end{centering}
\caption{PADS overview. \revaddone{Inputs: \ac{gnss} position ($\mathbf{p}_0$), network positions ($\mathbf{p}_{1..M}$), and motion data ($\mathbf{v},\mathbf{a},\boldsymbol{\omega}$). Components: Rolling Window (\textcircled{\raisebox{-0.9pt}{1}})  selects recent data; Confidence Interval (\textcircled{\raisebox{-0.9pt}{2}}) estimates smoothed position $\hat{\mathbf{p}}_m$ and uncertainty $\hat{\boldsymbol{\Sigma}}_m$ for each source $m$ using motion-constrained regression and Gaussian processes; Decision-Making (\textcircled{\raisebox{-0.9pt}{3}}) fuses these intervals into a test statistic ($\varLambda_{1:M}$), computes an anomaly score ($f_{\mu,\sigma}$), makes a detection decision ($\hat{\mathcal{H}}$), and provides a recovered position ($\mu$). Outputs: $\hat{\mathcal{H}},f_{\mu,\sigma},\mu$ for each time $t$. It fuses information across multiple (available) sources and over the time domain.}}
\label{fig:scheme}
\end{figure*}
As Fig.~\ref{fig:scheme} shows, PADS detects attacks on \ac{gnss} positions by using information about network positions and movements, i.e., velocity, acceleration, and orientation. The input data is from \ac{gnss}, Wi-Fi, cellular, and onboard sensors. A rolling window takes a fixed-length series of positions, which will be interpolated to generate a smoothed trajectory. Subsequently, confidence intervals for these $M+1$ position series are calculated. The confidence intervals construct a fused test statistic to determine whether the current \ac{gnss} position reflects an attack, and if so, \revaddone{trigger} an alarm. The overall process is shown as Algorithm \ref{alg:overal}. 

\textbf{Rolling Window} (\textcircled{\raisebox{-0.9pt}{1}}) collects real-time positions of the platform from \ac{gnss}, Wi-Fi, and cellular sources (with $M+1$ categories available, and we consider $M=2$ here), in addition to velocity, acceleration, and orientation data from onboard sensors. These positions are organized in order of their timestamps. As a result, the filters implement rolling window techniques, taking fixed-length data at each detection time $t$ rather than utilizing the entire series.

\textbf{Confidence Interval} (\textcircled{\raisebox{-0.9pt}{2}}) is constructed by combining both motion and statistical models. The motion part (Sec.~\ref{subsec:motmod}) uses a local polynomial regression algorithm with movement constraints to fit the position data. It exploits both the short- and long-term characteristics of data: while onboard sensors provide high short-term accuracy, they are unable to maintain stable positional accuracy over time; to overcome this limitation, we integrate their data with less frequent but periodic positioning updates from terrestrial networks. We use regression to fit the positions and minimize the fitting error while adhering to movement constraints, ensuring the resulting fit conforms to velocity and acceleration. The statistical part (Sec.~\ref{subsec:stamod}) is a Gaussian process, \revaddone{focusing} on confidence intervals represented as a probability distribution, indicating the uncertainty of positions. We assume that the benign position series of $\mathbf{p}_m(t)$ follows a Gaussian process, with a mean already determined through the motion part. To compute the variance, Gaussian process regression uses a predefined covariance function and differences from fitted results. This allows for a better idea of the variability in position data. 

\textbf{Decision-Making} (\textcircled{\raisebox{-0.9pt}{3}}) builds a test statistic by using the mean and variance of confidence intervals. Then, anomaly detection relies on this test statistic across \revaddone{multiple sources and over the time domain}. To integrate the confidence intervals in the time domain, we calculate a weighted sum over time in Sec.~\ref{subsec:fusint}, with the weights normalized to ensure their summation to 1. The weighted sum is still Gaussian, with its mean and variance determined as a linear combination of the means and variances of the individual confidence intervals. To fuse the \ac{gnss}-derived position with $M$ opportunistic position sources, we multiply $M+1$ distributions for time $t$. Then, we create a fused test statistic based on it and apply an anomaly detector (Loda \cite{Pev:J16}) in Sec.~\ref{subsec:decmak}. 

\begin{algorithm}
\hspace*{\algorithmicindent} \textbf{Input} $\mathbf{p}_0(t),\mathbf{p}_1(t),\mathbf{p}_2(t),{\mathbf{v}}(t),{\mathbf{a}}(t),\boldsymbol{\omega}(t)$\\
\hspace*{\algorithmicindent} \textbf{Parameter} $w$\\
\hspace*{\algorithmicindent} \textbf{Output} $\hat{\mathcal{H}}(t),f_{\mu,\sigma}(t),\mu(t)$
\begin{algorithmic}[1]
\State \textbf{initialize} $S$ \Comment{Sequence of positions and motion data}
\State $t \gets 0$
\While{$t < N$}
\State $t \gets t+1$
\State \textbf{ensure} $length(S) = w$ \Comment{Rolling window}
\revaddone{
\State \textit{CI} $\triangleq \{\mathcal{N}(\hat{\mathbf{p}}_m(t), \hat{\boldsymbol{\Sigma}}_m(t))\}_{m=0..M} \gets$ Algorithm \ref{alg:confiden}
\Statex \hfill \Comment{Construct confidence intervals}
\State $\hat{\mathcal{H}}(t), f_{\mu,\sigma}(t), \mu(t) \gets$ Algorithm \ref{alg:decision}
\Statex \hfill \Comment{Detect attack using confidence intervals}
}
\If{$\hat{\mathcal{H}}(t)=\mathcal{H}_1$} 
    \State $S \gets \{S;\; \mathbf{p}_1(t),\mathbf{p}_2(t),{\mathbf{v}}(t),{\mathbf{a}}(t),\boldsymbol{\omega}(t) \}$
\Else
    \State $S \gets \{S;\; \mathbf{p}_0(t),\mathbf{p}_1(t),\mathbf{p}_2(t),{\mathbf{v}}(t),{\mathbf{a}}(t),\boldsymbol{\omega}(t) \}$
\EndIf \Comment{Append data at $t$ to the sequence}
\EndWhile
\end{algorithmic}
\caption{PADS overall operation with two alternative positioning sources and onboard sensors\label{alg:overal}}
\end{algorithm}

\subsection{Rolling Window for Detection}
\label{winrol}
We combine screening and detection: instead of analyzing the entire time series at each detection step for every time slot $t$, we select a fixed-size series by implementing a rolling window mechanism with a specific window size. This ensures that only recent network-based positions and motion data are used for evaluating potential \revaddone{attacks on the} current \ac{gnss}. 
\subsubsection{Coordinate Format}
Coordinates $\mathbf{p}_m(t) \in \mathbb{R}^2, m=0,1,...,M$ are formatted according to the World Geodetic System (WGS) Latitude, Longitude, Altitude standard. Data from onboard sensors $\mathbf{v}(t),\mathbf{a}(t) \in \mathbb{R}^3$ adheres to a local coordinate system in the same units as $\mathbf{p}_m(t)$, and $\boldsymbol{\omega}(t) \in \mathbb{R}^3$ comprises roll ($\phi$), pitch ($\theta$), and yaw ($\psi$) angles from gyroscope and magnetometer, which represent the orientations with respect to the local coordinates and WGS. 
\subsubsection{Window Size $w$}
The length of data series $S$ is a parameter: $S=\{ \mathbf{p}_m(i),{\mathbf{v}}(i),{\mathbf{a}}(i),\boldsymbol{\omega}(i) \}$ for $t-w<i<t$. Numerous approaches are available \revaddone{to determine} a good rolling window size. One example is to use a ``trial and error'' strategy that minimizes the \ac{mse} of positioning. It is a small-scale experiment with a range of window sizes and evaluating their performance on the validation set before detection, which is shown in our experiment results. Upon selecting an appropriate window size $w$, we can then move forward with processing $S$.
\subsubsection{Processing $S$}
At each $t$, the filtering process receives detection feedback regarding the current \ac{gnss} position, indicating whether it is potentially under attack. In the event of an alarm, the filtering process constructs $S$ for $t+1$ by incorporating data from sources excluding \ac{gnss}, denoted as ${ \mathbf{p}_m(t),{\mathbf{v}}(t),{\mathbf{a}}(t),\boldsymbol{\omega}(t) }, m \in {1,2,...,M}$. Conversely, if no attack is detected, the filtering updates $S$ using information from all available sources, denoted as ${ \mathbf{p}_m(t),{\mathbf{v}}(t),{\mathbf{a}}(t),\boldsymbol{\omega}(t) }, m \in {0,1,...,M}$.

\subsection{Constructing Confidence Intervals}
\label{concon}
\revaddone{This process involves two main steps: first, a model-based approach using motion-constrained regression to obtain smoothed positions, depicted in the dotted lines of Fig.~\ref{fig:gp_part}, and second, a data-driven approach using Gaussian processes to model position uncertainty, as depicted in the shaded areas of Fig.~\ref{fig:gp_part}. The process is described} in Algorithm~\ref{alg:confiden}.
\begin{algorithm}
\hspace*{\algorithmicindent} \textbf{Input} $S$\\
\hspace*{\algorithmicindent} \textbf{Output} \textit{CI}
\begin{algorithmic}[1]
\State $i \gets 0$
\While{$i < w$}
    \State $i \gets i+1$
    \State $\mathbf{W} \gets$ \eqref{eq:proall} \Comment{Compute the curve-fitting parameter}
    \State $\hat{\mathbf{p}}_m(t-w+i) \gets$ \eqref{eq:estpmt} \Comment{Smoothen positions}
    \State $\mathbf{x}_m(t-w+i) \gets$ \eqref{eq:gpresraw} \Comment{Residuals after smoothing}
\EndWhile
\State $\hat{\mathbf{x}}_m(t) \gets$ \eqref{eq:gplinest} \Comment{Estimate uncertainty at $t$ using residuals}
\State \textit{CI} $\gets$ \eqref{eq:conint} \Comment{Combine smoothed position and uncertainty}
\end{algorithmic}
\caption{Construct confidence intervals of positions\label{alg:confiden}}
\end{algorithm}

\begin{figure}
\begin{centering}
\includegraphics[width=\columnwidth]{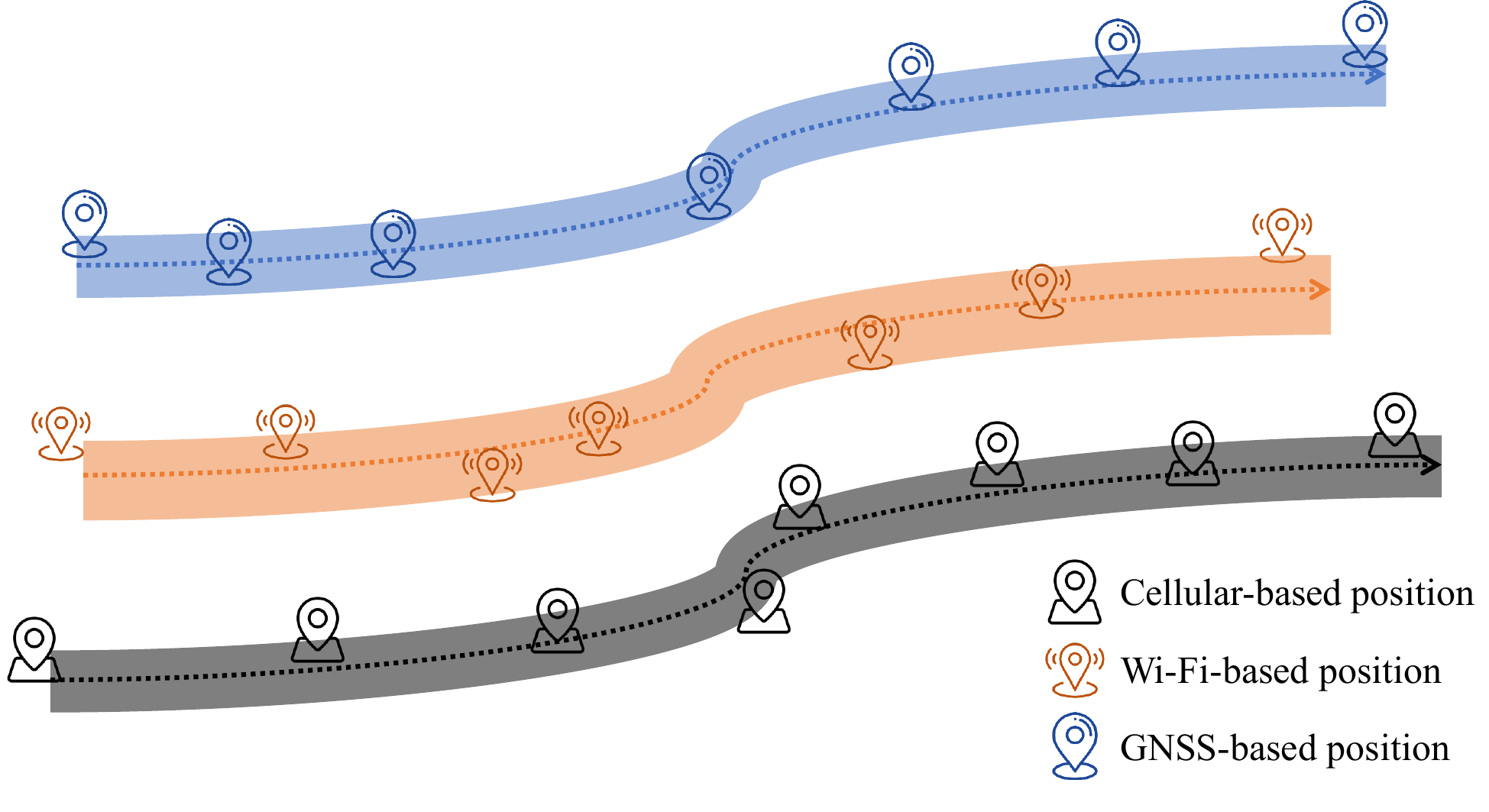}
\par\end{centering}
\caption{\revaddone{Local p}olynomial regression \revaddone{is used to estimate} traces (tiled view), and \revaddone{the} Gaussian process \revaddone{is used to model the} residual part of estimated positions (shaded areas).}
\label{fig:gp_part}
\end{figure}

\subsubsection{Motion-Assisted Fitting}
\label{subsec:motmod}
\revaddone{We use local polynomial regression for its flexibility in interpolating and predicting positions, based on discrete $\mathbf{p}_m(t)$ position points and motion data ${\mathbf{v}}(t),{\mathbf{a}}(t),\boldsymbol{\omega}(t)$. Crucially, we incorporate this with motion constraints derived from onboard sensors. It allows smoothing of noisy position measurements while ensuring the resulting position is physically plausible (adhering to velocity and acceleration limits), and the constrained optimization provides robustness against outliers compared to unconstrained fits or filters that might fuse spoofed data or \ac{imu} noise into updates. Local polynomial regression} involves fitting a Taylor expansion at a given point of a function through weighted least squares \cite{Fan:b96}. Thus, for a polynomial with degree $n$, at a given time $t$, the estimator $\hat{\mathbf{p}}_m(t)$ is represented as
\begin{equation}
    \hat{\mathbf{p}}_m(t) = \mathbf{W}\mathbf{t}
\label{eq:estpmt}
\end{equation}
where $\mathbf{W} \in \mathbb{R} ^{2 \times (n+1)}$ denotes the polynomial coefficients of Taylor expansion that need to be determined, and $\mathbf{t}$ represents a $(n+1)$ dimensional vector, $[\mathbf{t}]_i=t^{i-1}$. 

To determine $\mathbf{W}$, we introduce an optimization problem and present a theorem for the estimation process, ensuring both \revaddone{computational efficiency and reliability} of position predictions. $\mathbf{W}$ at the $m$th position and time $t$ is from 
\begin{equation}
    \begin{array}{*{20}{c}}
    {\mathop {\min }\limits_{\mathbf{W}} }&{f_\mathcal{P}(\mathbf{W})}\\ 
    \textrm{s.t.}&{|\hat{\mathbf{p}}_m(t) - \tilde{\mathbf{p}}_m(t)| \le \boldsymbol{\epsilon}_t}
    \end{array}
\label{eq:proall}
\end{equation}
where $\tilde{\mathbf{p}}_m(t)$ is the $m$th position based on motion data (to constrain $\hat{\mathbf{p}}_m(t)$), and $\boldsymbol{\epsilon}_t \in \mathbb{R}^2$ represents a small tolerance. $\tilde{\mathbf{p}}_m(t)$ ensures the physical feasibility of a short-term movement (onboard sensors), and $\mathbf{p}_m(t)$ \revaddone{ensures} long-term (network-based positions) anti-spoofing considerations. The objective function for regression in \eqref{eq:proall} \revaddone{should minimize the weighted squared error between the fitted polynomial $\hat{\mathbf{p}}_m(t') = \mathbf{W}\mathbf{t'}$ and the observed positions $\mathbf{p}_m(t')$, defined as}
\begin{equation}
    f_\mathcal{P}(\mathbf{W})=\sum\limits_{t'=t-w}^{t} [\mathbf{W} \mathbf{t'}-\mathbf{p}_m(t')]^\mathrm{T} K_\text{loc}(t'-t)[\mathbf{W} \mathbf{t'}-\mathbf{p}_m(t')]
\label{eq:proobj}
\end{equation}
where $K_\text{loc}(x)=\exp(-\kappa x^2)$ is a kernel function assigning weights that help emphasize the contribution of \revaddone{recent} data points while down-weighting the influence of more distant points, $\kappa$ is a kernel parameter, and $\mathbf{p}_m(t')$ are data points.

To provide $\tilde{\mathbf{p}}_m(t)$ for \eqref{eq:proall} using motion data, it is essential to standardize the coordinate systems of onboard sensors. $\mathbf{R}$ represents the rotation matrix responsible for converting the local coordinate system to WGS coordinates \cite{LiuPap:C23}:
\begin{align*}\mathbf{R}(t) & =\mathbf{R}_{\psi}(t)\mathbf{R}_{\theta}(t)\mathbf{R}_{\phi}(t).
\end{align*}
The state of the mobile platform is $\big(\mathbf{p}_m(t),\mathbf{v}(t) \big)$, so
\begin{align}
\tilde{\mathbf{p}}_m(t)&=\mathbf{p}_m(t-\Delta t)+\mathbf{R}(t-\Delta t)\mathbf{v}(t-\Delta t)\Delta t\notag\\
&\qquad+\frac{1}{2}\mathbf{R}(t-\Delta t)\mathbf{a}(t-\Delta t)(\Delta t)^2
\end{align}
and $\tilde{\mathbf{p}}_m(0)$ are initialized by the first \ac{gnss} and network positions. Moreover, if the onboard sensor does not furnish velocity information, $\mathbf{v}(t)$ can be substituted by $\mathbf{v}(t-\Delta t)+\intop_{t-\Delta t}^{t}\mathbf{a}(t)\textrm{d}t$, and $\mathbf{v}(t-\Delta t)$ is from the checked \ac{gnss}. Similarly, if acceleration information is unavailable from the onboard sensor, $\mathbf{a}(t)$ is assumed to be zero, indicating uniform motion over $\Delta t$. Then, $\tilde{\mathbf{p}}_m(t)$ in the constraint provides a rough movement range for the smoothed position $\hat{\mathbf{p}}_m(t)$. 

\begin{thm}
The estimator $\hat{\mathbf{p}}_m(t)$ in \eqref{eq:estpmt} can estimate $\mathbf{p}_m(t)$ within polynomial time.
\end{thm}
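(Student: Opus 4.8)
The plan is to recognize the fitting problem \eqref{eq:proall}--\eqref{eq:proobj} as a convex quadratic program and then invoke (and, for an explicit bound, reconstruct) the polynomial-time solvability of such programs. First I would vectorize the unknown: stacking the rows of $\mathbf{W}\in\mathbb{R}^{2\times(n+1)}$ into $\mathbf{w}=\operatorname{vec}(\mathbf{W})\in\mathbb{R}^{2(n+1)}$, the fitted value $\mathbf{W}\mathbf{t'}$ becomes a linear map $\mathbf{A}(t')\mathbf{w}$ with a fixed matrix $\mathbf{A}(t')\in\mathbb{R}^{2\times 2(n+1)}$ built from $\mathbf{t'}$. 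The objective \eqref{eq:proobj} then reads $f_\mathcal{P}(\mathbf{w})=\mathbf{w}^\mathrm{T}\mathbf{Q}\mathbf{w}-2\mathbf{b}^\mathrm{T}\mathbf{w}+c$ with $\mathbf{Q}=\sum_{t'}K_\text{loc}(t'-t)\,\mathbf{A}(t')^\mathrm{T}\mathbf{A}(t')$. Since each weight $K_\text{loc}(t'-t)=\exp(-\kappa(t'-t)^2)>0$, $\mathbf{Q}$ is a nonnegative combination of \ac{psd} terms and is therefore \ac{psd}, so $f_\mathcal{P}$ is convex. Because $\hat{\mathbf{p}}_m(t)=\mathbf{W}\mathbf{t}$ is affine in $\mathbf{w}$ and $\tilde{\mathbf{p}}_m(t)$ is a constant at evaluation time, the componentwise bound $|\hat{\mathbf{p}}_m(t)-\tilde{\mathbf{p}}_m(t)|\le\boldsymbol{\epsilon}_t$ splits into the two affine inequalities $\mathbf{A}(t)\mathbf{w}-\tilde{\mathbf{p}}_m(t)\le\boldsymbol{\epsilon}_t$ and $-(\mathbf{A}(t)\mathbf{w}-\tilde{\mathbf{p}}_m(t))\le\boldsymbol{\epsilon}_t$, i.e.\ at most four linear constraints defining a polyhedron. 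Hence \eqref{eq:proall} minimizes a convex quadratic over a polyhedron, which settles polynomial-time solvability by standard interior-point results.

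To make the bound explicit rather than cite a black box, I would instead characterize the minimizer through the \ac{kkt} conditions. With at most four affine constraints the active set has constant size, so I can enumerate the $O(1)$ candidate active sets; for each, stationarity gives a linear system in $(\mathbf{w},\boldsymbol{\lambda})$ of size $O(n)$, solvable by Gaussian elimination in $O(n^3)$. Forming $\mathbf{Q}$ and $\mathbf{b}$ costs $O(w\,n^2)$ (summing $w$ weighted outer products), and the inactive-constraint branch reduces to the weighted normal equations $\mathbf{W}^\star=\big(\sum_{t'}K_\text{loc}(t'-t)\mathbf{p}_m(t')\mathbf{t'}^\mathrm{T}\big)\big(\sum_{t'}K_\text{loc}(t'-t)\mathbf{t'}\mathbf{t'}^\mathrm{T}\big)^{-1}$, again $O(w\,n^2+n^3)$. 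Substituting $\mathbf{W}^\star$ into \eqref{eq:estpmt} yields $\hat{\mathbf{p}}_m(t)$ in $O(n)$ further operations, so the whole estimate is computed in time polynomial in the window length $w$ and the degree $n$, with the optimal feasible \ac{kkt} point selected by comparing the finitely many candidates.

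The main obstacle is not depth but precision: I must verify that the constrained optimum is genuinely characterized by the \ac{kkt} system (a constraint qualification is needed, but it holds trivially since the constraints are affine) and that the weighted Gram matrix $\sum_{t'}K_\text{loc}(t'-t)\mathbf{t'}\mathbf{t'}^\mathrm{T}$ is invertible, which makes $f_\mathcal{P}$ strictly convex and the minimizer unique. Invertibility follows whenever the $w$ timestamps in the window contain at least $n+1$ distinct values, so that the vectors $\mathbf{t'}$ span $\mathbb{R}^{n+1}$ by a Vandermonde argument; I would state this mild nondegeneracy condition explicitly, since otherwise the closed form must be replaced by a pseudoinverse (or one simply falls back on the generic interior-point guarantee), in either case preserving the polynomial-time conclusion.
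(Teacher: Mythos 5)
Your proposal is correct and follows essentially the same route as the paper's proof: both establish convexity of the weighted least-squares objective (you via the \ac{psd} matrix $\mathbf{Q}=\sum_{t'}K_\text{loc}(t'-t)\,\mathbf{A}(t')^\mathrm{T}\mathbf{A}(t')$, the paper via the Hessian $2\sum_{t'}K_\text{loc}(t'-t)\,(\mathbf{t'}\mathbf{t'}^\mathrm{T})\otimes\mathbb{I}$), split the absolute-value constraint into two affine inequalities, and conclude polynomial-time solvability of the resulting convex quadratic program. Your additional \ac{kkt} active-set enumeration, explicit $O(w\,n^2+n^3)$ cost accounting, and the Vandermonde nondegeneracy condition for strict convexity go beyond the paper's argument (which simply appeals to Lagrange multipliers and interior-point methods) but do not change the underlying approach.
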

\begin{proof}
See Appendix \ref{app:prothe1}.
\end{proof}

\revaddone{This guarantees that} \eqref{eq:proall} \revaddone{is convex and solvable in polynomial time, making} the estimation \revaddone{suitable for real-time applications}. Updating the position estimations from $t=1$ to $N$, the $\hat{\mathbf{p}}_m(t)$ values should closely resemble the dotted lines depicted in Fig.~\ref{fig:gp_part}, illustrating the smoothed positions. These lines represent the estimated trajectory of the mobile platform from a tiled view to enhance visualization.

\subsubsection{Modeling Uncertainty}
\label{subsec:stamod}
\revaddone{After obtaining the smoothed positions $\hat{\mathbf{p}}_m(t)$ from regression, we model the remaining uncertainty. While \eqref{eq:proobj} provides a measure of fit, we employ Gaussian processes for a more principled and flexible approach to modeling uncertainty. Gaussian processes \cite{RasWil:B05} offer a non-parametric, data-driven method to estimate the distribution of the position residuals. They can capture temporal correlations in the uncertainty via kernel functions, providing a better uncertainty representation than just assuming independent noise. Denote the residuals of the estimated positions at $m,t$ as}
\begin{equation}
    \mathbf{x}_m(t)=\hat{\mathbf{p}}_m(t)-\mathbf{p}_m(t).
\label{eq:gpresraw}
\end{equation}
Then, in the absence of \ac{gnss} attack-induced deviations, $\left\{\mathbf{x}_m(i);i\in (0,t)\right\}$ are zero-mean Gaussian random variables with unknown standard deviations $\boldsymbol{\sigma}_m (i)$. A covariance function $K(\mathbf{x}_m(t),\mathbf{x}_m(t'))=\frac{1}{2}\mathbb{E}[(\mathbf{x}_m(t)-\mathbf{x}_m(t'))^2]$ is selected to characterize the interrelation of two residuals, $\mathbf{x}_m(t)$ and $\mathbf{x}_m(t')$, at time $t$ and $t'$. Commonly used kernels include linear, polynomial, and squared exponential covariance functions, and the best model and hyperparameters can generally be selected from cross-validation \cite{RasWil:B05}. Subsequently, a linear unbiased estimator can estimate the residual, $\mathbf{x}_m(t)$: 
\begin{equation}
    \hat{\mathbf{x}}_m(t)=\sum_{i=t-w}^{t-1} \lambda_i \mathbf{x}_m(i)
\label{eq:gplinest}
\end{equation}
where $\sum_{i=t-w}^{t-1} \lambda_i=1$. Gaussian process regression calculates $\lambda_i$, minimizing the variance of the estimation error: 
\begin{equation}
    \begin{array}{*{20}{c}}
  {\mathop {\min }\limits_{\mathbf{\boldsymbol{\lambda}}} }&{\mathbb{V}[\hat{\mathbf{x}}_m(t)-\mathbf{x}_m(t)]} \\ 
  {\text{s.t.}}&{\sum_{i=t-w}^{t-1} \lambda_i=1} 
\end{array}
\label{eq:gpropt}
\end{equation}
which can be solved using the Lagrangian method. 
\begin{thm}
Given a covariance function, $\hat{\mathbf{x}}_m(t)$ in \eqref{eq:gplinest} can estimate $\mathbf{p}_m(t)$ \revaddone{uncertainty} in polynomial time.
\end{thm}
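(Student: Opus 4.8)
The plan is to show that the variance-minimization problem \eqref{eq:gpropt} is a quadratic program with a single linear equality constraint, whose optimality conditions assemble into a linear system—the ordinary kriging system—solvable by one matrix factorization of size $w+1$. First I would recast the objective $\mathbb{V}[\hat{\mathbf{x}}_m(t)-\mathbf{x}_m(t)]$ as an explicit quadratic form in the weight vector $\boldsymbol{\lambda}$. Since every residual $\mathbf{x}_m(i)$ is zero-mean, the estimation error $e=\sum_i \lambda_i \mathbf{x}_m(i)-\mathbf{x}_m(t)$ is zero-mean as well, so its variance equals $\mathbb{E}[e^2]$, and expanding the square produces a quadratic in $\boldsymbol{\lambda}$ whose coefficients are second moments of the residuals.

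The crux of the argument—and the step I expect to be the main obstacle—is expressing those coefficients using \emph{only} the selected covariance function $K$, even though the per-sample standard deviations $\boldsymbol{\sigma}_m(i)$ are unknown. The resolution is the unbiasedness constraint $\sum_i \lambda_i=1$, which lets me substitute $\mathbf{x}_m(t)=\sum_i \lambda_i \mathbf{x}_m(t)$ and rewrite $e=\sum_i \lambda_i\bigl(\mathbf{x}_m(i)-\mathbf{x}_m(t)\bigr)$. Applying the identity $2ab=a^2+b^2-(a-b)^2$ to each cross term and taking expectations, every second moment collapses into semivariogram values $K(\mathbf{x}_m(i),\mathbf{x}_m(j))=\tfrac12\mathbb{E}[(\mathbf{x}_m(i)-\mathbf{x}_m(j))^2]$, and the unknown variances cancel. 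I would thereby obtain
\[
\mathbb{V}[\hat{\mathbf{x}}_m(t)-\mathbf{x}_m(t)]=2\sum_i \lambda_i K_{it}-\sum_i\sum_j \lambda_i\lambda_j K_{ij},
\]
with $K_{ij}=K(\mathbf{x}_m(i),\mathbf{x}_m(j))$. This is exactly why Theorem 2 needs nothing but a covariance function as input.

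With the objective in this form I would form the Lagrangian with a single multiplier $\mu$ for the constraint and write the \ac{kkt} stationarity conditions $\partial L/\partial\boldsymbol{\lambda}=\mathbf{0}$ together with $\sum_i\lambda_i=1$. These are linear in $(\boldsymbol{\lambda},\mu)$ and assemble into the bordered system
\[
\begin{bmatrix}\mathbf{K} & \mathbf{1}\\ \mathbf{1}^{\mathrm{T}} & 0\end{bmatrix}\begin{bmatrix}\boldsymbol{\lambda}\\ \mu\end{bmatrix}=\begin{bmatrix}\mathbf{k}_t\\ 1\end{bmatrix},
\]
where $\mathbf{K}=[K_{ij}]$ and $\mathbf{k}_t=[K_{it}]$. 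Because $K$ is a valid (conditionally negative-definite) covariance function, the bordered matrix is nonsingular and the objective is bounded below by zero as a genuine variance, so the \ac{kkt} point is the unique global minimizer. The system has fixed dimension $w+1$, independent of $N$; forming it costs $O(w^2)$ kernel evaluations and solving it by Cholesky/Gaussian elimination costs $O(w^3)$, which is polynomial.

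Finally I would extract the uncertainty. Back-substituting the optimal $\boldsymbol{\lambda}^{\star}$ into \eqref{eq:gplinest} gives the residual estimate $\hat{\mathbf{x}}_m(t)$, and inserting it into the closed-form objective above yields the minimized error variance; together these supply the mean and variance of the Gaussian uncertainty that forms the confidence interval \eqref{eq:conint}, and since the residual of the smoothed position is $\mathbf{x}_m(t)=\hat{\mathbf{p}}_m(t)-\mathbf{p}_m(t)$, this is precisely the uncertainty of $\mathbf{p}_m(t)$. Treating each coordinate of $\mathbf{x}_m(t)\in\mathbb{R}^2$ separately keeps the overall cost polynomial, completing the argument.
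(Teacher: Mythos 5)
Your proposal is correct and follows essentially the same route as the paper's proof: both recast \eqref{eq:gpropt} as the ordinary kriging problem, expand the error variance into semivariogram terms $\tfrac12\mathbb{E}[(\mathbf{x}_m(i)-\mathbf{x}_m(j))^2]$ under the constraint $\sum_i\lambda_i=1$, form the Lagrangian with a single multiplier, and solve the resulting $(w+1)$-dimensional linear system in $\mathcal{O}(w^3)$ time. Your write-up merely makes explicit two steps the paper leaves implicit (the cancellation of the unknown variances via the unbiasedness constraint, and the nonsingularity of the bordered system), which is a welcome elaboration rather than a different argument.
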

\begin{proof}
See Appendix \ref{app:prothe2}.
\end{proof}

As the prediction yields a distribution for each time $t$, the confidence intervals $\mathcal{I}_m(t)$ indicate the uncertainty of the estimation. Since \ac{gnss} and network-based positions are subject to observational noise, the confidence intervals conform to a Gaussian distribution at each time $t$ for each information source in a benign environment. Consequently, its mean $\hat{\mathbf{p}}_m(t)$ and standard deviation $\hat{\boldsymbol{\sigma}}_m (t)$ of $\hat{\mathbf{x}}_m(t)$ characterize the confidence interval:
\begin{equation}
    \mathcal{I}_m(t) \sim \mathcal{N}(\hat{\mathbf{p}}_m(t), \hat{\boldsymbol{\Sigma}}_m(t)),m=0,1,...,M
\label{eq:conint}
\end{equation}
where $\hat{\boldsymbol{\Sigma}}_m (t) = \operatorname {diag} ([\hat{\boldsymbol{\sigma}}_m (t)]^2) \in \mathbb{R} ^{2 \times 2}$ is a diagonal matrix and the square is Hadamard power. An illustration in Fig.~\ref{fig:gp_part} shows lines connecting the individual position pins, i.e., $\hat{\mathbf{p}}_m(t)$, while the shaded areas are the uncertainties, $\hat{\boldsymbol{\Sigma}}_m (t)$. 

\subsection{Decision-Making Using the Intervals}
\label{croseq}
 
\begin{algorithm}
\hspace*{\algorithmicindent} \textbf{Input} \textit{CI}\\
\hspace*{\algorithmicindent} \textbf{Parameter} $\gamma$\\
\hspace*{\algorithmicindent} \textbf{Output} \revaddone{$\hat{\mathcal{H}}(t),f_{\mu,\sigma}(t),\mu(t)$}
\begin{algorithmic}[1]
\State $\varLambda_{1:M} \left( \mathbf{p}_0 \left( t \right) \right) \gets$ \eqref{eq:fused_test} \Comment{Fuse confidence intervals}
\State \revaddone{$\mu(t) \gets$ \eqref{eq:poscom} \Comment{Fused alternative position}}
\State $f_{\mu,\sigma}(t) \gets$ \eqref{eq:anosco} \Comment{Compute anomaly score}
\If{$f_{\mu,\sigma}(t) \ge \gamma$} 
    \State $\hat{\mathcal{H}}(t) \gets \mathcal{H}_1$ \Comment{Positive as score exceeds a threshold}
\Else
    \State $\hat{\mathcal{H}}(t) \gets \mathcal{H}_0$ \Comment{Negative otherwise}
\EndIf 
\end{algorithmic}
\caption{Decision based on confidence intervals from opportunistic position information \label{alg:decision}}
\end{algorithm}

\revaddone{Having obtained Gaussian confidence intervals $\mathcal{I}_m(t) \sim \mathcal{N}(\hat{\mathbf{p}}_m(t), \hat{\boldsymbol{\Sigma}}_m(t))$, decision-making fuses this information} from all position sources (i.e., \ac{gnss}, Wi-Fi, and cellular-based positions) into a single test statistic. It then utilizes \revaddone{an} anomaly detector for \ac{gnss} position attacks. We have two perspectives in the context of test statistic construction. First, the temporal perspective assesses the historical behavior of positions over time to capture patterns and anomalies. Second, the categorical perspective groups different sources of positions. 
\subsubsection{Fusing Intervals}
\label{subsec:fusint}
To process the data, $S$, along with its associated confidence intervals, $\mathcal{I}_m(t)=\hat{\mathbf{p}}_m(t)+\hat{\mathbf{x}}_m(t)$, which are derived from Algorithm \ref{alg:confiden}, we fuse these confidence intervals. It involves aggregating the weighted confidence intervals across $t$ with weights denoted as $K(m,t)$, which is a kernel function to ensure that $K(m,t)$ from $t-w$ to $t$ sum to 1. Then, the temporal fusion is
\revaddone{
\begin{equation}
Z(m,t) \triangleq \sum_{t'=t-w}^{t}K(m,t')\mathcal{I}_m(t')
\end{equation}
and denote its \ac{pdf} as $f_{Z(m,t)}(\mathbf{p})$. Thus, the $m$th test statistic for $\mathcal{H}_0$ is 
\begin{equation}
\varLambda_m\left( \mathbf{p}_0 \left( t \right) \right) |\mathcal{H}_0
=f_{Z(m,t)}(\mathbf{p}_{0}(t)).
\end{equation}
where $\mathbf{p}_{0}(t)$ is \ac{gnss} position. }For $M$ sources of positions, the fused test statistic is 
\begin{equation}
    \varLambda_{1:M}\left( \mathbf{p}_0 \left( t \right) \right)=\prod_{m=0}^M \varLambda_m\left( \mathbf{p}_0 \left( t \right) \right) |\mathcal{H}_0 \;.
\label{eq:fused_test}
\end{equation}
To simplify the calculation, we observe that $\varLambda_{1:M}\left( \mathbf{p}_0 \left( t \right) \right)$ \revaddone{is proportional to a Gaussian \ac{pdf}}. 
\begin{thm}
$\varLambda_{1:M}\left(x \right) = \frac{S}{\sigma(t)}\varphi \left({\frac {x-\mu(t) }{\sigma(t) }}\right)$, where $S$ is a constant scaling factor,
\begin{equation}
    \sigma(t) = \left(\sum_{m=0}^M {\left(\sum_{t'=t-w}^{t}\left[K\left(m,t'\right)\right]^{2}\revaddone{\left[\hat{\boldsymbol{\sigma}}_m (t')\right]^2}\right)}^{-1} \right)^{-\frac{1}{2}}
\label{eq:varcom}
\end{equation}
\begin{equation}
    \mu(t) = \sigma^2(t) \sum_{m=0}^M \frac{\sum_{t'=t-w}^{t}K\left(m,t'\right)\hat{\mathbf{p}}_{m}(t')}{\sum_{t'=t-w}^{t}\left[K\left(m,t'\right)\right]^{2}\revaddone{\left[\hat{\boldsymbol{\sigma}}_m (t')\right]^2}} \;.
\label{eq:poscom}
\end{equation}
\end{thm}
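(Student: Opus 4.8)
The plan is to carry out the fusion in the two stages by which $\varLambda_{1:M}$ is defined: first the temporal aggregation that produces each source-wise statistic $\varLambda_m$, and then the categorical product over $m$. The central observation is that each stage preserves Gaussianity, so that completing the square in a single quadratic exponent delivers the claimed form. Throughout, since $\hat{\boldsymbol{\Sigma}}_m(t)$ is diagonal, the two coordinates decouple and the argument is run componentwise; I would treat $x$, $\hat{\mathbf{p}}_m(t')$, and $\hat{\boldsymbol{\sigma}}_m(t')$ as scalars and understand the vector identities in \eqref{eq:varcom}--\eqref{eq:poscom} to hold in each coordinate.

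For the temporal stage, I would note that $Z(m,t)=\sum_{t'=t-w}^{t}K(m,t')\,\mathcal{I}_m(t')$ is a fixed linear combination of the Gaussian confidence intervals $\mathcal{I}_m(t')\sim\mathcal{N}(\hat{\mathbf{p}}_m(t'),\hat{\boldsymbol{\Sigma}}_m(t'))$. Treating the $\mathcal{I}_m(t')$ across $t'$ as independent, $Z(m,t)$ is again Gaussian with mean $\mu_m(t)=\sum_{t'}K(m,t')\hat{\mathbf{p}}_m(t')$ and variance $\sigma_m^2(t)=\sum_{t'}[K(m,t')]^2[\hat{\boldsymbol{\sigma}}_m(t')]^2$. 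Hence $\varLambda_m(x)=f_{Z(m,t)}(x)$ is itself the Gaussian density with parameters $(\mu_m(t),\sigma_m^2(t))$, writable as $\frac{1}{\sigma_m(t)}\varphi\!\left((x-\mu_m(t))/\sigma_m(t)\right)$.

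The categorical stage is the product of these $M+1$ Gaussian densities evaluated at the common argument $x=\mathbf{p}_0(t)$. Here the key algebraic step is to add the exponents, $-\tfrac{1}{2}\sum_m(x-\mu_m)^2/\sigma_m^2$, expand in $x$, and complete the square. Collecting the $x^2$ and $x$ terms produces a single quadratic whose coefficient of $x^2$ is $\sum_m\sigma_m^{-2}$ and whose linear coefficient is $\sum_m\mu_m\sigma_m^{-2}$; this identifies the fused precision as the sum of precisions, $\sigma^2(t)=(\sum_m\sigma_m^{-2})^{-1}$, and the fused mean as the precision-weighted average $\mu(t)=\sigma^2(t)\sum_m\mu_m\sigma_m^{-2}$. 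Substituting $\sigma_m^2$ and $\mu_m$ from the temporal stage reproduces \eqref{eq:varcom} and \eqref{eq:poscom} verbatim. Every factor of the product free of $x$—the normalizers $1/(\sigma_m\sqrt{2\pi})$ together with the leftover constant $\exp(-\tfrac{1}{2}[\sum_m\mu_m^2\sigma_m^{-2}-\mu^2(t)\sigma^{-2}(t)])$ generated by the square-completion—is collected into the single scalar $S$, giving $\varLambda_{1:M}(x)=\tfrac{S}{\sigma(t)}\varphi\!\left((x-\mu(t))/\sigma(t)\right)$.

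The part to handle with care is not the algebra, which is the standard ``product of Gaussians is proportional to a Gaussian'' completion of the square, but the two independence assumptions that license the variance bookkeeping: independence of the confidence intervals across $t'$, so that the temporal variance is $\sum[K]^2[\hat{\boldsymbol{\sigma}}]^2$ rather than a full quadratic form carrying cross-covariances, and conditional independence of the sources across $m$, so that fusing them by a product of densities is well-founded. I would state these as the operating assumptions of the fusion model; granting them, the proportionality to a Gaussian and the explicit $\sigma(t),\mu(t)$ follow immediately from the square-completion, and pinning down $S$ only requires tracking which terms are free of $x$.
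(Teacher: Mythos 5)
Your proposal is correct and takes essentially the same route as the paper's proof: first establish that each temporally fused $Z(m,t)$ is Gaussian with mean $\sum_{t'}K(m,t')\hat{\mathbf{p}}_m(t')$ and variance $\sum_{t'}[K(m,t')]^2[\hat{\boldsymbol{\sigma}}_m(t')]^2$ (the paper derives this via moment-generating functions, you by invoking the standard linear-combination-of-independent-Gaussians fact), then show the product of the $M+1$ Gaussian densities is proportional to a Gaussian whose precision is the sum of precisions and whose mean is the precision-weighted average (the paper multiplies pairwise and applies induction, you complete the square over all terms at once). These are minor technical variations of the same argument, and your explicit statement of the independence assumptions and the coordinatewise decoupling under diagonal $\hat{\boldsymbol{\Sigma}}_m$ matches, and indeed slightly sharpens, the paper's operating assumptions.
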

\begin{proof}
See Appendix \ref{app:prothe3}.
\end{proof}
\revaddone{Note that in the fused mean $\mu(t)$, each source $m$'s contribution is inversely weighted by its estimated uncertainty $\hat{\boldsymbol{\Sigma}}_m(t')$ (derived from the Gaussian process). This means that sources estimated to be less certain (higher variance) are naturally down-weighted in the test statistic.}

\subsubsection{Decision-Making}
\label{subsec:decmak}
After constructing the function in \eqref{eq:fused_test}, we apply Loda \cite{Pev:J16} to generate an anomaly score $f_{\mu,\sigma}(t)$, i.e., the probability of \ac{gnss} being under attack, shown as Algorithm \ref{alg:decision}. \revaddone{It is unsupervised, requiring no labeled attack data for training; lightweight and efficient, based on an ensemble of simple histograms on projections, making it suitable for resource-constrained platforms; robust due to its ensemble nature; and hyperparameter-free, simplifying deployment.} The detector works by using random projections of the input and then comparing their histograms to find differences.

Input data consists of $\sigma(t)$ from \eqref{eq:varcom} and $\mu(t)-\mathbf{p}_0 \left( t \right)$ from \eqref{eq:poscom} (other possible information includes antenna gain, dilution of precision, etc.). For the training phase, the algorithm learns the benign behavior of the data by constructing a set of models, i.e., representations, using a subset of the available benign data. This can be summarized by the following steps: (i) projection: it projects benign data $(\sigma(t), \mu(t)-\mathbf{p}_0 \left( t \right))$ onto a lower-dimensional space using $k$ random projection vectors, $\{\mathbf{v}_i\}_{i=1}^k$, to get the projected ones, $\{\tilde{x}_i\}_{i=1}^k$, where $\tilde{x}_i=(\sigma(t), \mu(t)-\mathbf{p}_0 \left( t \right))\mathbf{v}_i$, (ii) histogram: it calculates histogram, $\mathbf{h}_i$, of each projected value, $\tilde{x}_i$, from $i$th projection vector, and (iii) representations: the projection vectors, $\{\mathbf{v}_i\}_{i=1}^k$, and histograms, $\{\mathbf{h}_i\}_{i=1}^k$, are the benign patterns. 

For the testing (detecting) phase, the algorithm uses the representations constructed during the training phase to detect anomalies: (i) projection: this step is the same as in the training phase to get the projected $\{\tilde{z}_i\}_{i=1}^k$, and (ii) comparison: the projected testing vectors are compared with the trained histograms $\{\mathbf{h}_i\}_{i=1}^k$ to compute the anomaly score, $f_{\mu,\sigma}(t)$, by using the frequency of $\{\tilde{z}_i\}_{i=1}^k$ in the distribution of $\{\mathbf{h}_i\}_{i=1}^k$:
\begin{equation}
    f_{\mu,\sigma}(t)=-\frac{1}{k}\sum_{i=1}^k\log\mathbb{P}[\tilde{z}_i].
\label{eq:anosco}
\end{equation}
Note that all hyperparameters are determined automatically, as presented in \cite{Pev:J16}; thus, the detector is termed hyperparameter-free. The anomaly score, $f_{\mu,\sigma}(t)$, reflects \revaddone{the degree of abnormality relative to the learned benign distribution}, which is empirically chosen based on \revaddone{the} benign training data and environment. When $f_{\mu,\sigma}(t) \ge \gamma$, the decision is to raise an alarm that \revaddone{the} \ac{gnss} position is attacked ($\mathcal{H}_1$). Even if it may be a false alarm, our provided recovered position from \eqref{eq:poscom} is close to \revaddone{the} actual position because \revaddone{it} is a secure fusion of \ac{gnss}, network positions, and onboard sensors that removes spoofed \ac{gnss} positions in $S$. Therefore, this will not endanger the operation of one system that relies on this \ac{gnss} position. 

\revaddone{
\subsection{Computational Complexity}
The complexity of the proposed PADS framework in Algorithm \ref{alg:overal} consists of solving the convex optimization problem \eqref{eq:proall}, Gaussian process regression \eqref{eq:gpropt}, and Loda \eqref{eq:anosco}. First, \eqref{eq:proall} implies computations for forming and solving a quadratic program. The cost to form the objective part is $\mathcal{O}(w(n+1)^2)$ and the constraint part is $\mathcal{O}(n)$, where $n$ is the polynomial degree, usually taking a value of $1-3$. The cost to solve it depends on the (analytic, numerical) method, and it is approximately $\mathcal{O}((n+1)^3)$. Considering $n$ is small, the total cost is $\mathcal{O}(w)$. }

\revaddone{Second, for the construction of uncertainty using a Gaussian process, the main cost lies in solving the linear system derived from the Lagrangian to find the weights $\lambda_i$. This typically involves inverting a $w \times w$ covariance matrix, leading to a complexity of $\mathcal{O}(w^3)$. Third, fusing intervals requires calculations with complexity linear in $M$ (number of sources) and $w$. Loda detector requires projecting the input data onto $k$ random vectors. The complexity per detection is $\mathcal{O}(M \times w + k)$. }

\revaddone{As a result, the total complexity of detecting $\mathbf{p}_0$ is $\mathcal{O}(w+w^3+M \times w + k)$. Hence, the complexity is dominated by the Gaussian process component ($\mathcal{O}(w^3)$), but remains polynomial. Given $w=15-30$, PADS is computationally feasible. Our following experiments on mobile platforms also confirmed efficient, real-time processing at typical \ac{gnss} rates.}

\section{Experiment Results}
\label{numres}

\begin{table}
\centering
\caption{Datasets for experiments.}
\begin{tabular}{lrrrr}
\toprule
 & Ground Truth & GNSS & Network & Onboard Sensor \\
\midrule
Dataset A & 1 Hz & 1 Hz & 1 Hz \& 1 Hz & 200 Hz \\
Dataset B & 1 Hz & 1 Hz & 0.1 -- 0.3 Hz & 100 Hz \\
\bottomrule
\end{tabular}
\label{tab:dataset}
\end{table}

\begin{figure}
\begin{centering}
\includegraphics[trim={0 0 0 0},clip,width=.5\columnwidth]{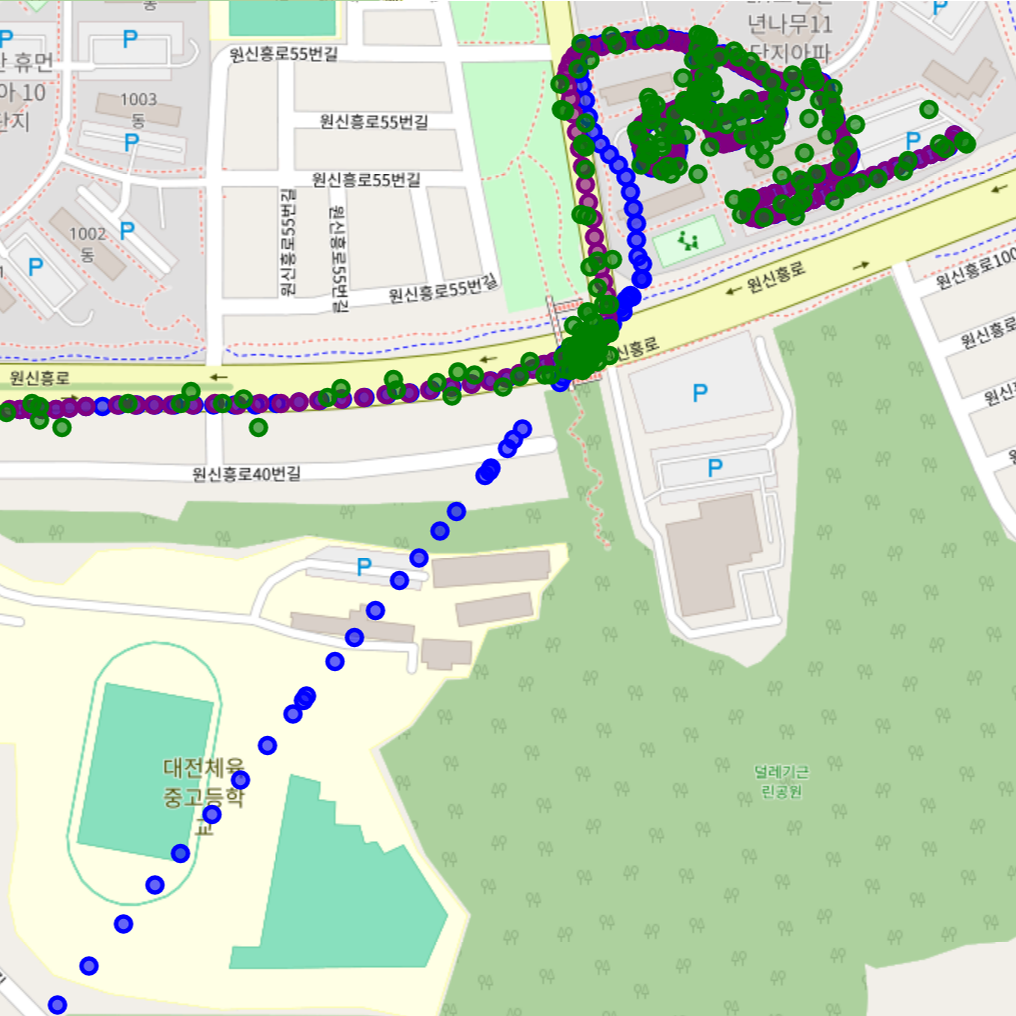}\includegraphics[trim={0 0 0 0},clip,width=.5\columnwidth]{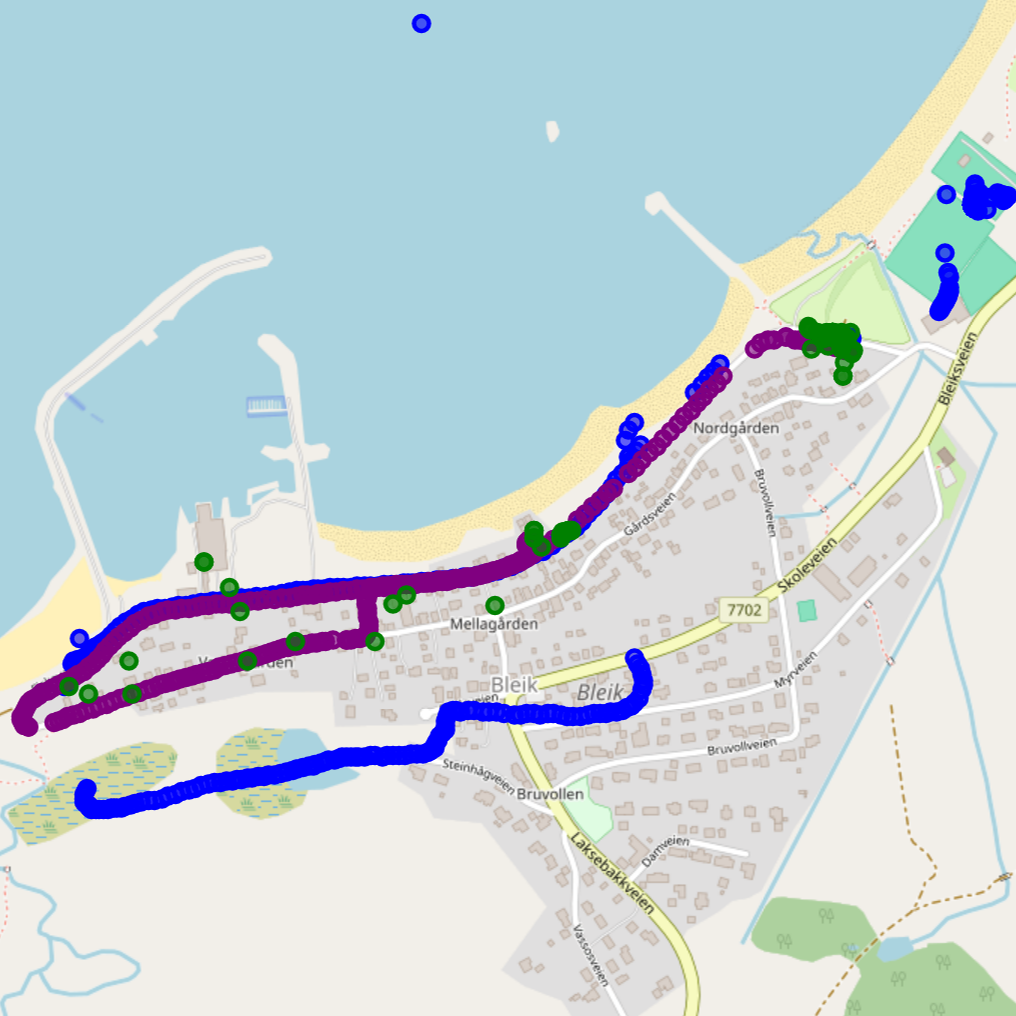}
\par\end{centering}
\caption{Ground truth (purple), network (green), and \ac{gnss} (blue) positions of two traces from Dataset A (left) and B (right). Note that ``network position'' refers to both Wi-Fi and cellular positions in Dataset A, and positions from Android Network Location Provider in Dataset B.}
\label{fig:dataset}
\end{figure}

\subsection{Experiment Setup}
We have two datasets (Table~\ref{tab:dataset} and illustrated in Fig.~\ref{fig:dataset}): (i) Dataset A \cite{SheWonCheChe:C20}, comprising 6 \ac{gnss} traces from outdoor urban environments and simulated attacks, and (ii) Dataset B is collected in a real \ac{gnss}-attack environment, Jammertest 2024, in Bleik village, Norway \cite{Jam:J24}. \Ac{gnss} receivers are mounted on vehicles with speeds ranging from 0 to 90 km/h. The error for benign \ac{gnss} positions is mostly within 3--10 meters for autonomous vehicles and smartphones in the test. 

\subsubsection{Dataset A}
It includes \ac{gnss} positions, \ac{imu} data, and ground truth positions from a simulated Apollo autonomous driving platform \cite{SheWonCheChe:C20}. In the same context, opportunistic position data is synthesized using a custom-made network simulator. The simulator generates the \ac{rss} from the seven nearest \acpl{bs} and \acpl{ap}, whose positions are from open databases \cite{Unw:J23} and \cite{BobArkUht:J23}. To accurately model the signal propagation, the simulation parameters for the free-space path loss model are derived from \ac{lte} TR36.814 \cite{TR36814} and 802.11n 2.4 GHz. The transmit power of a \ac{bs} is set at 20 dBm, while \revaddone{it is} 15 dBm for an \ac{ap}. Additionally, Gaussian noise with a variance level of 3 dB is added to the received power for network interfaces. The positioning algorithm based on \ac{ap} and \ac{bs} separately employs weighted nonlinear least squares \cite{Mozilla2023}, and the resulting positions are subject to estimation error with variances of 33 or 9 meters, as per \cite{KuuFalKatDia:J18,BaiSunDemZha:J22}. The unavailability probability, $U_m$, for $\mathbf{p}_m(t)$ is set at 0.2 per $t$ for all $m=1,2,...,M$, following a binomial distribution. The spoofing attacks are provided in \cite{SheWonCheChe:C20}. Its strategy consists of (i) vulnerability profiling, \revaddone{where} the attacker performs a constant spoofing to \ac{gnss} with a small deviation, and (ii) aggressive spoofing, \revaddone{where} the attacker makes the deviation grow exponentially after the receiver \revaddone{accepts} the spoofed position. 

\begin{figure}
\begin{centering}
\includegraphics[trim={1cm 1cm 0 1cm},clip,width=.53\columnwidth]{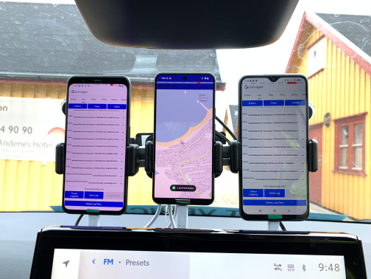}
\includegraphics[trim={5cm 3cm 0 0},clip,width=.439\columnwidth]{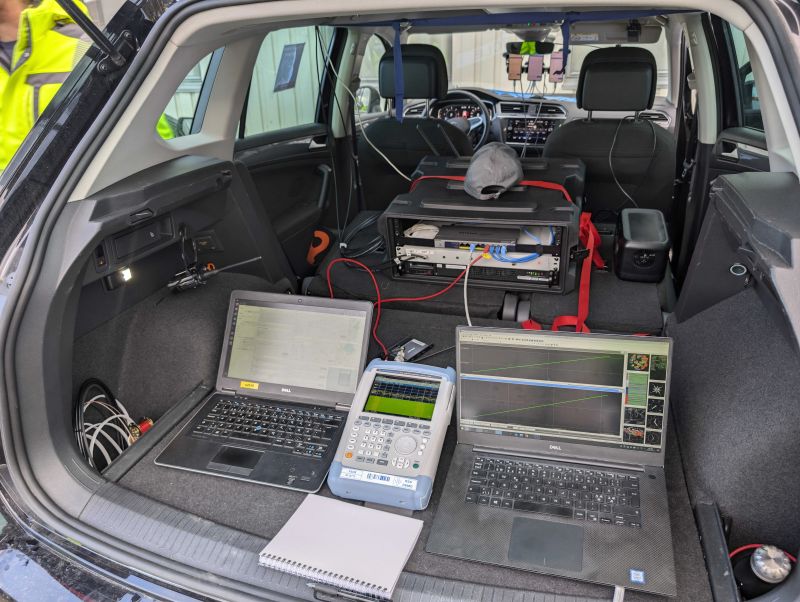}
\par\end{centering}
\caption{The placement of the mounted phones in a vehicle (car windshield).}
\label{fig:phones}
\end{figure}

\subsubsection{Dataset B}
\Ac{gnss} receivers are \revaddone{embedded} in Android smartphones, which are Samsung Galaxy S9, Redmi 9, Google Pixel 4 XL, and Google Pixel 8, shown in Fig.~\ref{fig:phones}, and two u-blox receivers as reference. The ground truth positions are provided by precise kinematic positioning results from u-blox ZED-F9P receivers using benign constellations and a nearby \ac{gnss} reference station\footnote{ZED-F9P logs UBX files, and then we use RTKLIB tool to convert \revaddone{them} into RINEX files. The reference station with \revaddone{the} code name ``ANDE00NOR'' is located at Andøya island and logs RINEX files. The RINEX files contain multiple constellations (\ac{gps}, GLONASS, Galileo, and BeiDou) plus SBAS and QZSS satellites, with two frequencies, so we use L1+L2, kinematic positioning mode, and benign constellations that exclude \ac{gps} and Galileo in the RTKLIB post-processing tool.}. The (opportunistic) network positions, \ac{gnss} positions, and \ac{imu} data are collected with GNSSLogger. The Android \texttt{LocationManager} provides access to multiple types of location services. It obtains 1 Hz updates of \ac{gnss} positions, while network positioning results are not at a fixed frequency and may update less frequently than \ac{gnss}, typically 3 to 10 seconds between updates, depending on the environment. The \texttt{SensorManager} stably delivers linear acceleration, gyroscope, magnetometer, and orientation at 100 Hz. \revaddone{Note that the u-blox receiver uses only benign satellite constellations and an RTK reference station as ground truth. The results presented later evaluate the detection capabilities of PADS using Android smartphone standard \ac{gnss}/Network/Sensors APIs while they were subjected to spoofing attacks. The u-blox by itself is not a baseline method. }The \ac{gnss} position attacks involved both synchronous and asynchronous spoofing \cite{Jam:J24}: stationary spoofing of small/large position jumps, SBAS spoofing, simulated driving, flying spoofing, as well as jamming. The attack equipment includes cigarette-type jammers, handheld jammers, permanently installed jammers, BladeRF x115 mobile \ac{sdr} spoofers, and USRP X300 \acpl{sdr}. Simulated \ac{gnss} signals corresponding to 5 predefined traces were transmitted using Skydel and USRP X300 with an amplifier, including paths with small deviations or jumping to another distant place\footnote{https://github.com/NPRA/jammertest-plan/blob/main/Testcatalog.pdf}. 

Since the attack gradually deviates \ac{gnss} positions from the ground truth positions, we define a lower bound of attack deviation, $\boldsymbol{\delta}_\text{d}$, for the ground truth detection results. Two types of positions are close when \revaddone{there is} no attack. If the distance between \ac{gnss} and ground truth is larger than $\boldsymbol{\delta}_\text{d}=10$ meters, we classify this \ac{gnss} position as the result of an attack, as our ground truth detection. 

\subsection{Baseline Methods}
We consider the following four methods from related work as baselines, including \ac{sop}, \ac{imu}-based, and both, which can be implemented with the collected opportunistic information. 
\subsubsection{Signals of Opportunity (SOP, Baseline 1)}
\cite{OliSciIbrDip:J22} uses the broadcast signals from the \acpl{bs} and \acpl{ap} to validate \ac{gnss}. It calculates weights $\mathbf{w}=[w_1,w_2,...,w_J]$ based on \ac{rss}, where $J$ is the number of \acpl{bs}/\acpl{ap}, and the estimated mobile platform position is the weighted centroid $\hat{\mathbf{p}}_\text{c}=\frac{\mathbf{w} \cdot \mathbf{p}_\text{bs}}{|\mathbf{w}|}$, where $\mathbf{p}_\text{bs} \in \mathbb{R}^{J \times 2}$ is the concatenated coordinate of all \acpl{bs} and \acpl{ap}. If the distance of $\hat{\mathbf{p}}_\text{c}$ and the \ac{gnss}-provided position is higher than a threshold, its outcome is ``under attack''. 
\subsubsection{Kalman Filter (KF, Baseline 2)}
\cite{KokHolSch:J18} fuses \ac{imu} and \ac{gnss} measurements, then we adapt it for \ac{imu}-based spoofing detection \cite{KhaRosLanCha:C14,CecForLauTom:J21}. The filter estimates the position of the mobile platform based on \ac{gnss} position and \ac{imu}. It minimizes the error of observation and motion to recursively get the mean and covariance matrix of the estimated position $\hat{\mathbf{p}}_\text{c}(t)$. If the residual between $\hat{\mathbf{p}}_\text{c}(t)$ and \ac{gnss} position is larger than a threshold, this method detects it as an attack. 
\subsubsection{Particle Filter (PF, Baseline 3)}
Similar to Kalman filter, we use a simple particle filter for \ac{imu}-based spoofing detection, which is based on the Markov Monte Carlo method \cite{Particle2019}. It generates particles uniformly around the initial position and then calculates the error between particles and position measurements. Then, the estimated position is a weighted sum of the particles based on errors. In the detection phase, a distance threshold-based detector is used for classification. 
\subsubsection{Combined Metrics (GLRT, Baseline 4)}
It follows the \ac{glrt} framework in \cite{RotCheLoWal:J21}. Step 1 involves calculating detection metric $\log \varLambda _m = -\frac{1}{2}||\mathbf{p}_0 \left( t \right)-\tilde{\mathbf{p}}_m(t)||^2_{\boldsymbol {\Sigma }_m (t)^{-1}}$ for $m$. Step 2 assumes these detection metrics \revaddone{are} statistically independent, and a likelihood ratio function is used to combine them: $\log \varLambda _{1:M} = \sum_{m=1}^M \log \varLambda _m$. Step 3 tests $\log \varLambda _{1:M}$ whether it is zero mean (i.e., under $\mathcal{H}_0$) or non-zero (i.e., an attack, under $\mathcal{H}_1$).

\textbf{Comparison} includes three metrics: true positive rate, $R_\text{TP}$, detection time delay, $\Delta T$, and the absolute error of recovered position, $\mu$. To perform a fair comparison between PADS and the baseline methods, we assess three cases and PADS variants: (i) exclusive utilization of network-based positioning results (removing the constraints of \eqref{eq:proall}, termed PADS-N), (ii) sole reliance on onboard sensors (removing network positions in \eqref{eq:proall}, termed PADS-O), and (iii) combined usage of network positions and onboard sensors (as \eqref{eq:proall}, termed PADS-A). 

\begin{figure}
    \centering
    \begin{minipage}{\columnwidth}
    \centering
    \includegraphics[trim={0 0 0 0},clip,width=\columnwidth]{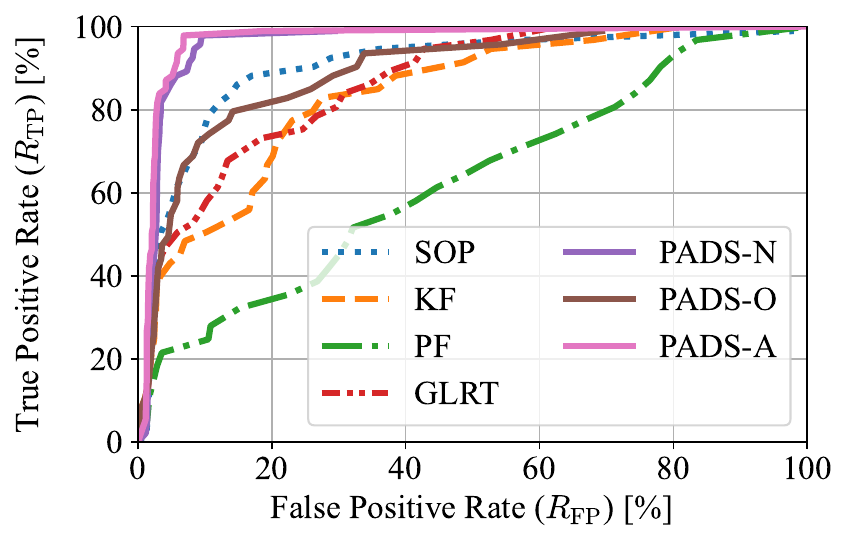}
    \end{minipage}
    \begin{minipage}{\columnwidth}
    \centering
    \includegraphics[trim={0 0 0 0},clip,width=\columnwidth]{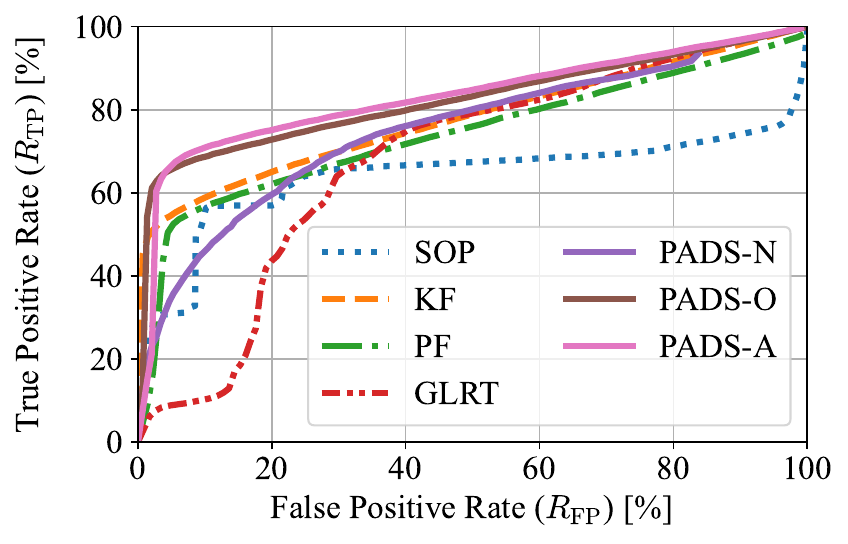}
    \end{minipage}
    \caption{\Ac{roc} curves for Dataset A (upper) and B (lower), i.e., $R_\text{TP}$ versus $R_\text{FP}$.}
    \label{fig:fprtpr}
\end{figure}

\subsection{Evaluation: True Positive Rate}
We investigate $R_\text{TP}$ at different $R_\text{FP}$ to plot \ac{roc} curves\footnote{All the work here is at the level of single-position detection, i.e., detecting each \ac{gnss} position based on historical measurements of the trace. We did not attempt to investigate whether a trace is under attack based on the spoofing detections of the positions of the entire trace.}. Our rolling window size is empirically set to $w=15$ for \ac{gnss}-provided and network-based positions, and we choose $\kappa=1$ in the kernel function $K_\text{loc}$.

PADS-N shares the same network conditions as Baseline 1. As shown in Fig.~\ref{fig:fprtpr}, PADS-N exhibits a modest improvement, at most 33\%, when $R_\text{FP}<15\%$. In Dataset A, it achieves $R_\text{TP}$ of 81--96\% when $R_\text{FP}$ is 5--15\%, compared to 63--87\% for Baseline 1. In Dataset B, it achieves $R_\text{TP}$ of 36--54\% when $R_\text{FP}$ is 5--15\%, similar to 33--57\% for Baseline 1. When $R_\text{FP}>10\%$, PADS-N has at most 20\% performance gain. Furthermore, the performance with Dataset A is better than \revaddone{that with} B because network positions are much sparser and \revaddone{noisier} in Dataset B. In general, as $R_\text{FP}$ increases, $R_\text{TP}$ tends to increase and converge for both PADS-N and \ac{sop}, both methods detecting the attacks well and thus \revaddone{resisting} gradual deviation or position jumping spoofing. 

PADS-O, KF (Baseline 2), and PF (Baseline 3) use \ac{gnss} position, incorporating motion data from onboard sensors. Fig.~\ref{fig:fprtpr} illustrates $R_\text{TP}$ as a function of $R_\text{FP}$. PADS-O and Baseline 2 maintain relatively consistent and similar trends of $R_\text{TP}$. However, in both Dataset A and B, Baseline 2 and Baseline 3 indicate relatively low $R_\text{TP}$, even at higher $R_\text{FP}$. PADS-O outperforms them with at most a 44\% $R_\text{TP}$ gain in Dataset A and 23\% in B, when $R_\text{FP}$ is 5--15\%. This is because the spoofed positions will influence the filters, while the proposed scheme detects and screens the spoofed position simultaneously. Furthermore, the regression in PADS-O can not only deal with Gaussian noise but also general zero mean noise, according to the least squares assumptions. Whenever the noise is not zero mean, PADS-O detects it as an anomaly caused by spoofing. 

Compared to PADS-A, \ac{glrt} (Baseline 4) struggles due to the absence of rolling window and motion-constrained regression for position data, hindering the effective fusion of heterogeneous data (i.e., positions, velocity, and acceleration). When utilizing all available opportunistic information sources, PADS-A consistently surpasses Baseline 4 across $R_\text{FP}$.

When comparing PADS-N and PADS-A, incorporating \ac{imu} results in a performance improvement of up to 7\% with Dataset A and 32\% with Dataset B. \ac{imu} data is particularly helpful when attack-induced position deviation \revaddone{grows} fast. In cases of subtle deviation changes, the effectiveness is relatively low. PADS-A achieves higher $R_\text{TP}$ compared to PADS-N and PADS-O, highlighting the performance gain resulting from the fusion of network-based positioning results and onboard sensor data. Considering PADS-O and other factors such as cost, sensor availability, and system complexity, the filters---Baseline 2 and Baseline 3---prove to be less effective compared to PADS-N but are relatively easy to acquire. We also observe that Baseline 2 and Baseline 3 perform much better in Dataset B than in A. This is because Dataset B is collected in a village without too many network infrastructures and the network positions are very sparse, so network positions can not produce high performance gain compared to onboard sensors. 

\begin{figure}
    \centering
    \begin{minipage}{\columnwidth}
    \centering
    \includegraphics[trim={0 0 0 0},clip,width=\columnwidth]{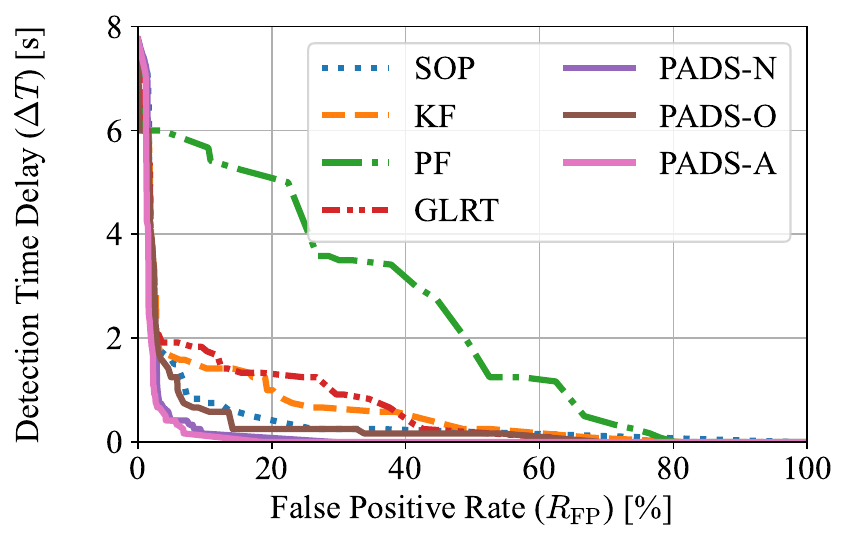}
    \end{minipage}
    \begin{minipage}{\columnwidth}
    \centering
    \includegraphics[trim={0 0 0 0},clip,width=\columnwidth]{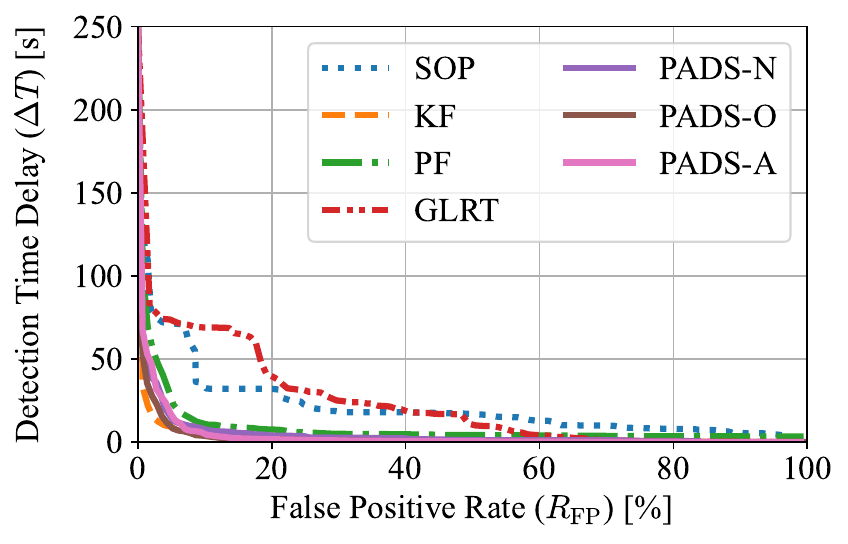}
    \end{minipage}
    \caption{$\Delta T$ versus $R_\text{FP}$ for Dataset A (upper) and B (lower).}
    \label{fig:fprdt}
\end{figure}

\subsection{Evaluation: Detection Time Delay}
The detection time delay, $\Delta T$, represents the duration between the initiation of an attack and its detection. Given the stealthy nature of spoofing attacks in \ac{gnss} traces, where deviations from the actual position evolve, analyzing the time delay in detecting such attacks is important. Our focus in this context is on measuring $\Delta T$, independent of computation delays. This metric reflects how fast the schemes can identify attacks, i.e., the sensitivity of the detection schemes. It is also influenced by factors such as the rolling window size, which will be investigated later.

Fig.~\ref{fig:fprdt} presents $\Delta T$ as a function of $R_\text{FP}$ for all three cases, respectively. PADS-N exhibits performance curves with similar shapes to Baseline 1 but mostly lower $\Delta T$. We have discovered that network-derived positions are inherently noisy, and PADS-N possesses the capability to effectively smoothen out this noise, resulting in a more accurate estimation of the actual position. The improved accuracy, in turn, allows it to detect spoofing and anomalies more quickly. Furthermore, PADS-N demonstrates a more significant reduction in $\Delta T$ as $R_\text{FP}$ increases to 15\%, indicating a better trade-off between false alarms and detection sensitivity.

PADS-O, KF (Baseline 2), and PF (Baseline 3) consistently exhibit $\Delta T$ exceeding at most 5--12 seconds when $R_\text{FP}$ is 5--15\%, due to the necessity of updating posterior distributions over sufficiently large time and accumulating deviation than the detection threshold. Moreover, the filtering process is influenced by the spoofed positions fed into the filter. In contrast, PADS-O outperforms both Baseline 2 and Baseline 3 by data fitting with motion information constraints and excluding spoofed positions before regression. 

PADS-A showcases lower $\Delta T$ compared to \ac{glrt} (Baseline 4) and the aforementioned filter-based schemes at equivalent $R_\text{FP}$. It consistently exhibits faster detection times across different datasets and $R_\text{FP}$. Additionally, PADS-A demonstrates a more significant reduction in $\Delta T$ as $R_\text{FP}$ increases, highlighting its success in keeping a balance between minimizing false alarms and reducing delay. Baseline 4 also performs well, delivering competitive $\Delta T$ compared to other schemes, especially at lower $R_\text{FP}<5\%$.

When considering three cases in conjunction with others, it is clear that better detection schemes and more precise network-based position information contribute to lower $\Delta T$. A high $R_\text{FP}$ also leads to a low $\Delta T$. Cases that give accurate opportunistic information for making decisions are more likely to catch spoofing attacks quickly. Conversely, methods that fail to integrate this information tend to miss detections, resulting in longer $\Delta T$. The use of the detection for high $R_\text{FP}$ is at the discretion of the method user, and we did not investigate how to use several successive alarms under some high $R_\text{FP}$. 

\subsection{Evaluation: Recovered Position Accuracy}
\begin{table*}
\caption{Absolute error evaluation of recovered position accuracy over different methods and datasets.}
\centering
\renewcommand{\arraystretch}{1.3}
\begin{tabular}{l*{8}{c}}
\toprule
\multirow{2}{*}{Methods} & \multicolumn{4}{c}{Dataset A} & \multicolumn{4}{c}{Dataset B} \\
\cmidrule(r){2-5} \cmidrule(l){6-9}
& Mean & Median & Best 20\% & Worst 20\% & Mean & Median & Best 20\% & Worst 20\% \\
\midrule
PADS-N & 2.45 & 1.57 & 0.48 & 3.77 & 243.82 & 30.90 & 8.48 & 260.36 \\
PADS-O & 11.43 & 2.02 & 0.59 & 5.00 & 276.61 & 35.87 & 9.32 & 350.17 \\ 
PADS-A & 2.92 & 1.87 & 0.83 & 4.22 & 260.61 & 35.62 & 9.31 & 345.05 \\ 
\ac{sop} & 5.87 & 4.80 & 2.42 & 8.85 & 544.91 & 298.14 & 283.60 & 655.61 \\ 
KF & 15.50 & 6.01 & 1.55 & 13.28 & 280.91 & 36.53 & 9.57 & 369.00 \\
PF & 64.11 & 24.11 & 6.92 & 81.96 & 266.64 & 48.52 & 9.00 & 275.06 \\
\ac{glrt} & 15.43 & 6.02 & 1.14 & 13.28 & 280.90 & 36.52 & 9.51 & 369.00 \\
\midrule
\revaddone{\ac{gnss}} & \revaddone{30.21} & \revaddone{22.33} & \revaddone{13.74} & \revaddone{41.41} & \revaddone{370.53} & \revaddone{169.93} & \revaddone{17.34} & \revaddone{479.69} \\
\bottomrule
\end{tabular}
\label{tab:recpos}
\end{table*}

The recovered position is defined as the mean of the confidence intervals in \eqref{eq:poscom}. We consider all spoofed positions and calculate the absolute error between the actual and the recovered position. Table~\ref{tab:recpos} showcases the worst 20\% (i.e., 20\% of the errors are higher than this), best 20\% (i.e., 20\% of the errors are lower than this particular), median, and mean error of the distribution. \revaddone{The last row, \ac{gnss}, is the result of positioning ($\mathbf{p}_0 \left( t \right)$) without any network signals and onboard sensors, representing the raw spoofing distances.} Baseline 1 has a relatively high worst 20\% error compared to its mean error, indicating \revaddone{a} big fluctuation in its performance. Notably, PADS-N demonstrates significantly lower error values than Baseline 1, particularly in terms of the worst 20\% error with both datasets. While PADS-O has a much lower error, Baseline 3 also has a relatively good performance for the best 20\% error. \revaddone{Regarding the raw \ac{gnss} error during attacks, the average distance is 30.21 meters in Dataset A and 370.53 meters in Dataset B.}

When considering the performance difference between Dataset A and B, all four types of errors are much higher with Dataset B than A. This is because the autonomous vehicle has much more accurate onboard sensors and \ac{gnss} positioning than the smartphones in the experiments. Also, autonomous vehicles \revaddone{are used} in the outdoor urban environment, whereas smartphones are \revaddone{used} in the village environment. As a result, the former has higher-quality network positions than the latter. Specifically, for the ratio of mean to median error, Dataset B is much larger than A, which means that many spoofed positions are not recovered, resulting in the mean being much larger than the median. This can also be confirmed in the detection accuracy of $R_\text{FP}$ and the worst 20\% error. This is due to some long-range absence of network positions in Dataset B, so these spoofed positions are mostly not detected and recovered. We also find the errors of filters and PADS-O are more similar in Dataset B than A, which means smartphone \ac{imu} contributes a lot in this experiment setting. 

PADS has the lowest mean error, ranging from 19\% to 98\% of the \ac{mae} observed in other methods, indicating better overall accuracy in estimation. This notable accuracy is due to the effective combination of various opportunistic information sources. While absolute error is a critical metric, it is also important to consider other related factors. They include consistency, reflecting whether the position result is stable as dataset size or categories of opportunistic information increase; variability, which denotes the uncertainty size associated with the position result; and specific application requirements, such as the types of sensors available on the mobile platform. 

\begin{table*}
\centering
\caption{$R_\text{TP}$ under different $w$ and $\kappa$ versus $R_\text{FP}$.}
\begin{tabular}{c*{7}{c}c*{7}{c}}
\toprule
\multirow{2}{*}{$R_\text{FP}$} & \multicolumn{7}{c}{$w$} & \multicolumn{7}{c}{$\kappa$} \\ 
\cmidrule(lr){2-8} \cmidrule(lr){9-15}
 & 5 & 10 & 15 & 20 & 25 & 30 & 35 & 0 & 0.5 & 1.0 & 1.5 & 2.0 & 2.5 & 3.0 \\ \midrule
5\% & 86.6\% & 90.3\% & 88.2\% & 89.0\% & 91.1\% & 91.3\% & 89.2\% & 55.4\% & 89.7\% & 88.2\% & 86.0\% & 85.5\% & 85.4\% & 86.0\% \\
10\% & 98.9\% & 98.3\% & 98.7\% & 97.8\% & 96.8\% & 98.7\% & 97.8\% & 73.7\% & 98.9\% & 98.9\% & 97.8\% & 97.8\% & 97.8\% & 97.8\% \\
15\% & 98.9\% & 98.9\% & 98.9\% & 98.9\% & 98.9\% & 99.8\% & 100\% & 84.9\% & 98.9\% & 98.9\% & 98.9\% & 98.9\% & 98.9\% & 98.9\% \\
20\% & 98.9\% & 100\% & 100\% & 100\% & 100\% & 100\% & 100\% & 91.4\% & 100\% & 100\% & 99.5\% & 100\% & 100\% & 99.5\% \\
25\% & 100\% & 100\% & 100\% & 100\% & 100\% & 100\% & 100\% & 91.4\% & 100\% & 100\% & 100\% & 100\% & 100\% & 100\% \\
\bottomrule
\end{tabular}
\label{tab:effwin}
\end{table*}

\subsection{Effect of Rolling Window}
We obtain some insights from the detection performance for different choices of window size $w$ and kernel parameter $\kappa$ in Sec.~\ref{winrol}. In this experiment, we use Dataset A, and \revaddone{the} settings are the same as the previous PADS-A, and $R_\text{FP}$ is 5--25\%, but $w$ ranges from 5 to 35 samples and $\kappa$ ranges from 0 to 3.0. The metrics for performance evaluation are the true positive rate, $R_\text{TP}$, shown in Table~\ref{tab:effwin}.

Determining the appropriate $w$ is a trade-off, as smaller windows offer computational efficiency but potentially sacrifice detection accuracy. Conversely, larger window sizes may lead to the processing of unnecessary historical data, resulting in slower and less accurate detection. Moreover, assigning weights to the data samples, which are controlled by $\kappa$ in the kernel function, is also important to make better use of the historical data. 

$R_\text{TP}$ in Table~\ref{tab:effwin} reveals some rough ``optimal'' values for $w$ and $\kappa$, indicating that further increasing $w$ or $\kappa$ does not significantly benefit detection performance. A small value of $w$ or $\kappa$ leads to lower accuracy, whereas increasing them will also increase $R_\text{TP}$ until they reach a sufficient size. This is because an oversized rolling window includes unnecessary data and gives too much weight, diminishing accuracy. $R_\text{TP}$ increases as $w$ increases, but the performance is stable after $w\ge10$. $R_\text{TP}$ increases as $\kappa$ increases, up until $\kappa=1.0$. Beyond this point, $R_\text{TP}$ starts to decrease again, as a value of $\kappa\ge1.5$ \revaddone{assigns} much higher weights to historical data, which can obscure the current information. In terms of computational complexity, the $w$ directly influences the complexity, i.e., the bigger $w$ comes with the higher complexity, as explained in Sec.~\ref{concon}. However, the change of $\kappa$ will not impact the complexity. 

\subsection{Discussion}
PADS demonstrates much better performance compared to baseline methods in Sec.~\ref{numres} due to its fusion of opportunistic information over time and consideration of position correlations. In most cases, by integrating data from all available sources, PADS-A naturally surpasses both PADS-N and PADS-O variants in $R_\text{TP}$ and $\Delta T$. Regarding the recovered position accuracy, PADS-A is in the middle of PADS-N and PADS-O. This highlights the difference between positioning and spoofing detection; even if the \ac{mae} of the positioning algorithm is excellent, it may not be good at detecting position spoofing. Hence, each has its own focus, and designing a detection-specific algorithm is important. Moreover, PADS-A mitigates the impact of accumulated errors from onboard sensors and one-time errors from network-based positioning, addressing both long-term and short-term inaccuracies. By incorporating uncertainty modeling (in Sec.~\ref{subsec:stamod}), we get an improved $R_\text{TP}$ compared to PADS without uncertainty. 

Given that the proposed PADS can operate at a software layer and does not rely on low-level hardware, its compatibility with other signal-level or cryptographic anti-attack methods is seamless. Incorporating this algorithm into existing consumer-grade devices is easy, making it convenient for deployment alongside other security measures \revaddone{to provide a multi-layer defense}. With positions and motion information as its only inputs, both commonly available in consumer-grade mobile platforms, the algorithm can opportunistically validate \ac{gnss} positions whenever network-based positions are accessible, providing a versatile and easily implementable solution for \ac{gnss} attack detection. Also, the computational complexity of PADS is adjustable, depending on the number of projection vectors in the lightweight anomaly detector \cite{Pev:J16} and the chosen window size in Algorithm \ref{alg:overal}. 

\section{Conclusion}
\label{conclu}
This paper presents an algorithmic framework for constructing confidence intervals for positions from opportunistic information and integrates them to estimate the probability of \ac{gnss} spoofing. It leverages both the motion dynamics of the mobile platform and the statistical characteristics of the positions. We employ a local polynomial regression technique with motion constraints, which is mathematically demonstrated to be convex for estimating and smoothing the position. Then, using Gaussian process regression, we capture the uncertainties inherent in position prediction and combine them into a fused test statistic for an unsupervised anomaly detector. \revaddone{The evaluation, based on both simulated and real-world attack datasets collected on common mobile platforms,} shows significant improvements, including up to a 54\% increase in the true positive rate. Specifically, when the false positive rate is between 5\% and 10\%, we have a 7--48\% gain in the true positive rate. 


\begin{appendices}
\section{Proof of Theorem 1}\label{app:prothe1}
\revaddone{
Recall the problem formulation \eqref{eq:proall}: 
\begin{equation*}
    \begin{array}{*{20}{c}} {\mathop {\min }\limits_{\mathbf{W} }}&{\sum\limits_{t'=t-w}^{t} [\mathbf{W} \mathbf{t'}-\mathbf{p}_m(t')]^\mathrm{T} K_\text{loc}(t'-t)[\mathbf{W} \mathbf{t'}-\mathbf{p}_m(t')]}\\ \textrm{s.t.}&{|\hat{\mathbf{p}}_m(t) - \tilde{\mathbf{p}}_m(t)| \le \boldsymbol{\epsilon}_t} \end{array}
\end{equation*}
where $\hat{\mathbf{p}}_m(t) = \mathbf{W}\mathbf{t}$ and $\mathbf{t}$ is the vector $[1,t,t^2,...,t^n]^\mathrm{T}$. $\mathbf{W}$ contains the coefficients we need to find. The kernel $K_\text{loc}(t'-t)$ assigns weights, and $\tilde{\mathbf{p}}_m(t)$ is the motion-derived position constraint.} 

\revaddone{To check for convexity, w}e take the second derivative of the objective function with respect to $\mathbf{W}$: 
\[
\revaddone{\nabla^2f_\mathcal{P}(\mathbf{W})= } 2 \sum\limits_{t=t'-w}^{t'}  K_\text{loc}(t-t')\cdot (\mathbf{t'}\cdot \mathbf{t'}^\mathrm{T} )^\mathrm{T} \otimes \mathbb{I}
\]
which is a positive definite matrix, as $K_\text{loc}(t-t')>0$ always holds\revaddone{, and the matrix $(\mathbf{t'}\cdot \mathbf{t'}^\mathrm{T} )^\mathrm{T} \otimes \mathbb{I}$ is positive semidefinite. Thus,} the objective function is convex. The constraints in \eqref{eq:proall} are equivalent to
$$
\left\{ \begin{array}{l}
	\mathbf{W} \mathbf{t} - \tilde{\mathbf{p}}_m(t) \le \revaddone{\boldsymbol{\epsilon}_t}\\
	\mathbf{W} \mathbf{t} - \tilde{\mathbf{p}}_m(t) \ge \revaddone{-\boldsymbol{\epsilon}_t}\\
\end{array} \right. ,\forall t
$$
which are \revaddone{the absolute difference between the fitted position $\hat{\mathbf{p}}_m(t) = \mathbf{W}\mathbf{t}$ and the motion-derived position $\tilde{\mathbf{p}}_m(t)$, so these are linear inequality constraints on $\mathbf{W}$. The set of points satisfying a system of linear inequalities forms a convex set.} Hence, \eqref{eq:proall} is a convex optimization problem. It is solvable using Lagrange multipliers \revaddone{or the practical algorithms like interior-point methods}; thus, the estimator $\hat{\mathbf{p}}_m(t)$ can estimate $\mathbf{p}_m(t)$ in polynomial time. 

\section{Proof of Theorem 2}\label{app:prothe2}
Ordinary Gaussian process regression uses a linear unbiased estimator for $\mathbf{x}_m(t)$. We can use Lagrange multipliers to extract the $\lambda_i$ parameters from the optimization problem. 
\begin{align*}
L(\boldsymbol{\lambda},\mu)&=\mathbb{V}[\hat{\mathbf{x}}_m(t)-\mathbf{x}_m(t)] +\mu(\sum_{i=t-w}^{t-1} \lambda_i-1)\\
&=\mathbb{E}[\sum_{i=t-w}^{t-1} \lambda_i \mathbf{x}_m(i)-\mathbf{x}_m(t)]^{2}+\mu(\sum_{i=t-w}^{t-1}\lambda_{i}-1)\\
&=\sum_{i=t-w}^{t-1}\lambda_{i}\mathbb{E}[\mathbf{x}_m(i)-\mathbf{x}_m(t)]^{2}\\
&\;-\frac{1}{2}\sum_{i,j}\lambda_{i}\lambda_{j}\mathbb{E}[\mathbf{x}_m(i)-\mathbf{x}_m(j)]^2+\mu(\sum_{i=t-w}^{t-1}\lambda_{i}-1)
\end{align*}
where $\mathbb{E}[\mathbf{x}_m(i)-\mathbf{x}_m(t)]^{2}$ and $\mathbb{E}[\mathbf{x}_m(i)-\mathbf{x}_m(j)]^2$ are calculated from a fixed covariance function $K(\mathbf{x}_m(t),\mathbf{x}_m(t'))$. Then, we take the partial derivatives of $L(\boldsymbol{\lambda},\mu)$ and set them to 0:
\begin{align}
\frac{\partial L(\boldsymbol{\lambda},\mu)}{\partial \boldsymbol{\lambda}}=0\\
\frac{\partial L(\boldsymbol{\lambda},\mu)}{\partial \mu}=0
\end{align}
obtaining a system of \revaddone{$w+1$} linear equations \revaddone{in the $w+1$ unknowns ($\boldsymbol{\lambda}$ and $\mu$)}. There exist several algorithms for solving it, such as Gaussian elimination\revaddone{. The computational complexity is dominated by the inversion or decomposition of the $(w+1)\times(w+1)$ matrix, which takes approximately $\mathcal{O}(w^3)$ arithmetic operations.}

\section{Proof of Theorem 3}\label{app:prothe3}
Assuming independence among the random variables $\mathcal{I}_m(t)$, we consider the time slots from $t-w$ to $t$, where $K(m,t)$ is determined by a kernel function to ensure that $K(m,t)$ from $t-w$ to $t$ sum to 1. Then, we use the moment-generating function: 
\begin{equation}
    M_{\mathcal{I}_m(t)}(s)=\mathbb{E}[e^{s\mathcal{I}_m(t)}].
\end{equation}
Recall that the weighted integral of $\mathcal{I}_m(t)$ from $t-w$ to $t$ is 
\begin{equation}
    Z(m,t)=\int_{t-w}^{t}K(m,t')\mathcal{I}_m(t')\textrm{d}t'
\end{equation}
practically utilized in discrete form: 
\begin{equation}
    Z(m,t)=\sum_{t'=t-w}^{t}K(m,t')\mathcal{I}_m(t').
\label{eq:z_mt}
\end{equation}
Its moment-generating function is 
\begin{align*}M_{Z\left(m,t\right)}\left(s\right) & =\mathbb{E}\left[e^{sZ\left(m,t\right)}\right]\\
 & =\mathbb{E}\left[e^{s\sum_{t'=t-w}^{t}K\left(m,t'\right)\mathcal{I}_{m}(t')}\right]\\
 & =\prod_{t'=t-w}^{t}\mathbb{E}\left[e^{sK\left(m,t'\right)\mathcal{I}_{m}(t')}\right]\\
 & =\prod_{t'=t-w}^{t}M_{\mathcal{I}_{m}(t')}\left(K\left(m,t'\right)s\right).
\end{align*}
The moment-generating function of a Normal distribution, $\mathcal{N}(\mu ,\sigma ^{2})$, is given by ${\displaystyle \exp({s\mu +{\frac {1}{2}}\sigma ^{2}s^{2}}})$. Thus, 
\begin{align*}
M_{Z\left(m,t\right)}\left(s\right)
=\prod_{t'=t-w}^{t}e^{K\left(m,t'\right)\hat{\mathbf{p}}_{m}(t')s+\frac{1}{2}\left[K\left(m,t'\right)\right]^{2}[\hat{\boldsymbol{\sigma}}_{m}(t')]^{2}s^2}\\
=e^{\sum_{t'=t-w}^{t}K\left(m,t'\right)\hat{\mathbf{p}}_{m}(t')s+\frac{1}{2}\sum_{t'=t-w}^{t}\left[K\left(m,t'\right)\right]^{2}[\hat{\boldsymbol{\sigma}}_{m}(t')]^{2}s^2}.
\end{align*}
$Z(m,t)$ follows a normal distribution, $\mathcal{N}(\sum_{t'=t-w}^{t}K\left(m,t'\right)\hat{\mathbf{p}}_{m}(t'),\sum_{t'=t-w}^{t}\left[K\left(m,t'\right)\right]^{2}\hat{\boldsymbol{\Sigma}}_{m}\left(t'\right))$, which means we compute the distribution $Z(m,t)$ by taking the weighted mean of distributions $\mathcal{I}_m(t)$.

Let $x\triangleq\mathbf{p}_{0}(t)$, $\mu_{(m)}\triangleq\mu\left(Z\left(m,t\right)\right)$, $\sigma_{(m)}\triangleq\sigma\left(Z\left(m,t\right)\right)$, and $\mu_{(0,1,...,M)},\sigma_{(0,1,...,M)}$ refer to the parameters of the multiplied Gaussian functions. The multiplication of the first two Gaussian functions, $\varLambda_0\left( \mathbf{p}_0 \left( t \right) \right) \times \varLambda_1\left( \mathbf{p}_0 \left( t \right) \right) |\mathcal{H}_0$, is 
\begin{align*}
    &\prod_{m=0}^1 \frac{1}{\sigma\left(Z\left(m,t\right)\right)} \varphi\left(\frac{x-\mu\left(Z\left(m,t\right)\right)}{\sigma\left(Z\left(m,t\right)\right)}\right)\\
    =&\prod_{m=0}^1 \frac{1}{\sigma_{(m)}} \varphi\left(\frac{x-\mu_{(m)}}{\sigma_{(m)}}\right)\\
    =&\frac{1}{2\pi\sigma_{(0)}^{2}\sigma_{(1)}^{2}}\exp\left[-\frac{(x-\mu_{(0)})^{2}}{2\sigma_{(0)}^{2}}-\frac{(x-\mu_{(1)})^{2}}{2\sigma_{(1)}^{2}}\right]\\
    =&\frac{S_{(0,1)}}{\sqrt{2\pi}\sigma_{(0,1)}}\exp\left[-\frac{(x-\mu_{(0,1)})^{2}}{2\sigma_{(0,1)}^{2}}\right]
\end{align*}
where \revaddone{$\varphi\left(\cdot\right)$ represents the standard normal \ac{pdf}, }$S_{(0,1)}$ is a constant scaling value, $\frac{1}{\sigma_{(0,1)}^{2}}=\frac{1}{\sigma_{(0)}^{2}}+\frac{1}{\sigma_{(1)}^{2}}$ and $\frac{\mu_{(0,1)}}{\sigma_{(0,1)}^{2}}=\frac{\mu_{(0)}}{\sigma_{(0)}^{2}}+\frac{\mu_{(1)}}{\sigma_{(1)}^{2}}$. Similarly, for $m$th Gaussian function, we have
\begin{align*}
    \frac{1}{\sigma_{(0,1,...,m)}^{2}}&=\frac{1}{\sigma_{(0,1,...,m-1)}^{2}}+\frac{1}{\sigma_{(m)}^{2}}\\
    \frac{\mu_{(0,1,...,m)}}{\sigma_{(0,1,...,m)}^{2}}&=\frac{\mu_{(0,1,...,m-1)}}{\sigma_{(0,1,...,m-1)}^{2}}+\frac{\mu_{(m)}}{\sigma_{(m)}^{2}}
\end{align*}
\revaddone{This process can be extended to multiply $M+1$ Gaussian likelihoods.} By mathematical induction, it follows that
\begin{align*}
    \frac{1}{\sigma^{2}}&\triangleq\frac{1}{\sigma_{(0,1,...,M)}^{2}}=\frac{1}{\sigma_{(0)}^{2}}+\frac{1}{\sigma_{(1)}^{2}}+...+\frac{1}{\sigma_{(M)}^{2}}\\
    \frac{\mu}{\sigma^{2}}&\triangleq\frac{\mu_{(0,1,...,M)}}{\sigma_{(0,1,...,M)}^{2}}=\frac{\mu_{(0)}}{\sigma_{(0)}^{2}}+\frac{\mu_{(1)}}{\sigma_{(1)}^{2}}+...+\frac{\mu_{(M)}}{\sigma_{(M)}^{2}}
\end{align*}
which is equivalent to \eqref{eq:varcom} and \eqref{eq:poscom}. Note that the scaling value
\begin{align}
    S =& \frac{(2\pi)^{-\frac{M}{2}} \sigma e^{({\mu^2}/{\sigma^{2}} - \sum_{i=0}^M  {\mu\left(Z\left(m,t\right)\right)^2}/{{\sigma\left(Z\left(m,t\right)\right)}^{2}} )/2}}{\prod_{m=0}^M\sigma\left(Z\left(m,t\right)\right)^2}
\end{align}
and it will not change the optimum. Thus, it is feasible to exclude this part from the calculation.

\end{appendices}

\bibliographystyle{IEEEtran}
\bibliography{reference/references.bib}

\begin{thebibliography}{10}
\providecommand{\url}[1]{#1}
\csname url@samestyle\endcsname
\providecommand{\newblock}{\relax}
\providecommand{\bibinfo}[2]{#2}
\providecommand{\BIBentrySTDinterwordspacing}{\spaceskip=0pt\relax}
\providecommand{\BIBentryALTinterwordstretchfactor}{4}
\providecommand{\BIBentryALTinterwordspacing}{\spaceskip=\fontdimen2\font plus
\BIBentryALTinterwordstretchfactor\fontdimen3\font minus
  \fontdimen4\font\relax}
\providecommand{\BIBforeignlanguage}[2]{{%
\expandafter\ifx\csname l@#1\endcsname\relax
\typeout{** WARNING: IEEEtran.bst: No hyphenation pattern has been}%
\typeout{** loaded for the language `#1'. Using the pattern for}%
\typeout{** the default language instead.}%
\else
\language=\csname l@#1\endcsname
\fi
#2}}
\providecommand{\BIBdecl}{\relax}
\BIBdecl

\bibitem{SheWonCheChe:C20}
J.~Shen, J.~Y. Won, Z.~Chen, and Q.~A. Chen, ``Drift devil: Security
  multi-sensor fusion based localization high-level autonomous driving {GPS}
  spoofing,'' in \emph{Proc. 29th USENIX Security}, virtual event, Aug. 2020.

\bibitem{PsiHumSta:J16}
M.~L. Psiaki, T.~E. Humphreys, and B.~Stauffer, ``Attackers can spoof
  navigation signals without our knowledge. here's how fight back {GPS} lies,''
  \emph{IEEE Spectrum}, vol.~53, no.~8, pp. 26--53, 2016.

\bibitem{Goo:J22}
\BIBentryALTinterwordspacing
D.~Goodin, ``{GPS} interference caused {FAA} reroute {Texas} air traffic.
  experts stumped,'' \emph{Ars Technica}, 2022. [Online]. Available:
  \url{https://arstechnica.com/information-technology/2022/10/cause-is-unknown...}
\BIBentrySTDinterwordspacing

\bibitem{PsiHum:B21}
M.~Psiaki and T.~Humphreys, \emph{Civilian {GNSS} Spoofing, Detection,
  Recovery}.\hskip 1em plus 0.5em minus 0.4em\relax Wiley-IEEE Press, 2021, pp.
  655--680.

\bibitem{WesRotHum:J12}
K.~Wesson, M.~Rothlisberger, and T.~Humphreys, ``Practical cryptographic civil
  {GPS} signal authentication,'' \emph{J. Inst. Navigation}, vol.~59, no.~3,
  pp. 177--193, 2012.

\bibitem{FerRijSecSim:J16}
I.~Fern{\'a}ndez-Hern{\'a}ndez, V.~Rijmen, G.~Seco-Granados, J.~Simon,
  I.~Rodr{\'\i}guez, and J.~D. Calle, ``A navigation message authentication
  proposal {Galileo} open service,'' \emph{J. Inst. Navigation}, vol.~63,
  no.~1, pp. 85--102, 2016.

\bibitem{AndCarDevGil:C17}
J.~M. Anderson, K.~L. Carroll, N.~P. DeVilbiss, J.~T. Gillis, J.~C. Hinks,
  B.~W. O'Hanlon, J.~J. Rushanan, L.~Scott, and R.~A. Yazdi, ``Chips-message
  robust authentication ({Chimera}) {GPS} civilian signals,'' in \emph{Proc.
  30th ION GNSS+}, Portland, OR, USA, Sep. 2017.

\bibitem{PapJov:C08}
P.~Papadimitratos and A.~Jovanovic, ``{GNSS}-based positioning: Attacks
  countermeasures,'' in \emph{Proc. IEEE Mil. Commun. Conf.}, San Diego, CA,
  USA, Nov. 2008.

\bibitem{BroJafDehNie:C12}
A.~Broumandan, A.~Jafarnia-Jahromi, V.~Dehghanian, J.~Nielsen, and
  G.~Lachapelle, ``{GNSS} spoofing detection handheld receivers based signal
  spatial correlation,'' in \emph{Proc. IEEE/ION PLANS}, Myrtle Beach, SC, USA,
  Apr. 2012.

\bibitem{ZhaTuhPap:C15}
K.~Zhang, R.~A. Tuhin, and P.~Papadimitratos, ``Detection exclusion {RAIM}
  algorithm spoofing/replaying attacks,'' in \emph{Proc. Int. Symp. GNSS},
  Kyoto, Japan, Nov. 2015.

\bibitem{LiuCheYanShu:C21}
S.~Liu, X.~Cheng, H.~Yang, Y.~Shu, X.~Weng, P.~Guo, K.~C. Zeng, G.~Wang, and
  Y.~Yang, ``Stars can tell: Robust method defend {GPS} spoofing attacks using
  off-the-shelf chipset,'' in \emph{Proc. 30th USENIX Security}, virtual event,
  Aug. 2021.

\bibitem{Gps:J24}
\BIBentryALTinterwordspacing
S.~Barbeau, ``{GPSTest} database,'' \emph{Google Sheets}, 2025. [Online].
  Available:
  \url{https://docs.google.com/spreadsheets/d/1jXtRCoEnnFNWj6_oFlVWflsf-b0jkfZpyhN-BXsv7uo/}
\BIBentrySTDinterwordspacing

\bibitem{OliSciIbrDip:C19}
G.~Oligeri, S.~Sciancalepore, O.~A. Ibrahim, and R.~Di~Pietro, ``Drive me not:
  {GPS} spoofing detection cellular network,'' in \emph{Proc. 12th ACM WiSec},
  Miami, FL, USA, May 2019.

\bibitem{GaoLi:J22}
Y.~Gao and G.~Li, ``A slowly varying spoofing algorithm avoiding
  tightly-coupled {GNSS/IMU} multiple anti-spoofing techniques,'' \emph{IEEE
  Trans. Veh. Technol.}, vol.~71, no.~8, pp. 8864--8876, 2022.

\bibitem{KhaRosLanCha:C14}
S.~Khanafseh, N.~Roshan, S.~Langel, F.-C. Chan, M.~Joerger, and B.~Pervan,
  ``{GPS} spoofing detection using {RAIM} {INS} coupling,'' in \emph{Proc.
  IEEE/ION PLANS}, Monterey, CA, USA, May 2014.

\bibitem{CecForLauTom:J21}
M.~Ceccato, F.~Formaggio, N.~Laurenti, and S.~Tomasin, ``Generalized likelihood
  ratio test {GNSS} spoofing detection devices {IMU},'' \emph{IEEE Trans. Inf.
  Forensics Secur.}, vol.~16, pp. 3496--3509, 2021.

\bibitem{MorKas:J21}
J.~J. Morales and Z.~M. Kassas, ``Tightly coupled inertial navigation system
  signals opportunity aiding,'' \emph{IEEE Trans. Aerosp. Electron. Syst.},
  vol.~57, no.~3, pp. 1930--1948, 2021.

\bibitem{MaaKas:J21}
M.~Maaref and Z.~M. Kassas, ``Autonomous integrity monitoring vehicular
  navigation cellular signals opportunity {IMU},'' \emph{IEEE Trans. Intell.
  Transp. Syst.}, vol.~23, no.~6, pp. 5586--5601, 2021.

\bibitem{KasKhaAbdLee:J22}
Z.~M. Kassas, J.~Khalife, A.~A. Abdallah, and C.~Lee, ``I am not afraid {GPS}
  jammer: Resilient navigation signals opportunity {GPS}-denied environments,''
  \emph{IEEE Aerosp. Electron. Syst. Mag.}, vol.~37, no.~7, pp. 4--19, 2022.

\bibitem{OliSciIbrDip:J22}
G.~Oligeri, S.~Sciancalepore, O.~A. Ibrahim, and R.~Di~Pietro, ``{GPS} spoofing
  detection crowd-sourced information connected vehicles,'' \emph{Comput.
  Netw.}, vol. 216, p. 109230, 2022.

\bibitem{GaoLi:J23b}
Y.~Gao and G.~Li, ``A {GNSS} instrumentation covert directional spoofing
  algorithm {UAV} equipped tightly-coupled {GNSS/IMU},'' \emph{IEEE Trans.
  Instrum. Meas.}, vol.~72, pp. 1--13, 2023.

\bibitem{KasKhaKhaLee:J24}
Z.~M. Kassas, N.~Khairallah, J.~J. Khalife, C.~Lee, J.~Jurado, S.~Wachtel,
  J.~Duede, Z.~Hoeffner, T.~Hulsey, R.~Quirarte \emph{et~al.}, ``Aircraft
  navigation {GNSS}-denied environments radio {SLAM} terrestrial signals
  opportunity,'' \emph{IEEE Trans. Intell. Transp. Syst.}, vol.~25, no.~10, pp.
  14\,164--14\,182, 2024.

\bibitem{BaiSunDemZha:J24}
L.~Bai, C.~Sun, A.~G. Dempster, H.~Zhao, and W.~Feng, ``{GNSS} spoofing
  detection mitigation single {5G} base station aiding,'' \emph{IEEE Trans.
  Aerosp. Electron. Syst.}, vol.~60, no.~4, pp. 4601--4620, 2024.

\bibitem{RasWil:B05}
C.~E. Rasmussen and C.~K.~I. Williams, \emph{{Gaussian Processes Machine
  Learning}}.\hskip 1em plus 0.5em minus 0.4em\relax The MIT Press, Nov. 2005.

\bibitem{Pev:J16}
T.~Pevn{\`y}, ``Loda: Lightweight on-line detector anomalies,'' \emph{Machine
  Learning}, vol. 102, pp. 275--304, 2016.

\bibitem{ZhaNasLi:J19}
Y.~Zhao, Z.~Nasrullah, and Z.~Li, ``{PyOD}: Python toolbox scalable outlier
  detection,'' \emph{J. Mach. Learn. Res.}, vol.~20, no.~96, pp. 1--7, 2019.

\bibitem{LiuPap:C23}
W.~Liu and P.~Papadimitratos, ``Probabilistic detection {GNSS} spoofing using
  opportunistic information,'' in \emph{Proc. IEEE/ION PLANS}, Monterey, CA,
  USA, Apr. 2023.

\bibitem{LenSpaPap:C22}
M.~Lenhart, M.~Spanghero, and P.~Papadimitratos, ``Distributed mobile message
  level relaying/replaying {GNSS} signals,'' in \emph{Proc. ION ITM}, Long
  Beach, CA, USA, Jan. 2022.

\bibitem{ZhaPap:C19b}
K.~Zhang and P.~Papadimitratos, ``On effects distance-decreasing attacks
  cryptographically protected {GNSS} signals,'' in \emph{Proc. ION ITM},
  Reston, VA, USA, Jan. 2019.

\bibitem{ZhoLvDenKe:J22}
W.~Zhou, Z.~Lv, X.~Deng, and Y.~Ke, ``A new induced {GNSS} spoofing detection
  method based on weighted second-order central moment,'' \emph{IEEE Sens. J.},
  vol.~22, no.~12, pp. 12\,064--12\,078, 2022.

\bibitem{SpaGeiPanPap:J25}
M.~Spanghero, F.~Geib, R.~Panier, and P.~Papadimitratos, ``{GNSS} jammer
  localization and identification with airborne commercial {GNSS} receivers,''
  \emph{IEEE Trans. Inf. Forensics Secur.}, vol.~20, pp. 3550--3565, 2025.

\bibitem{ZhoLvLiJia:J24}
W.~Zhou, Z.~Lv, G.~Li, B.~Jiao, and W.~Wu, ``Detection spoofing attacks global
  navigation satellite systems using {Kolmogorov-Smirnov} test-based signal
  quality monitoring method,'' \emph{IEEE Sens. J.}, vol.~24, no.~7, pp.
  10\,474--10\,490, 2024.

\bibitem{IqbAmaSik:J24}
A.~Iqbal, M.~N. Aman, and B.~Sikdar, ``A deep learning based induced {GNSS}
  spoof detection framework,'' \emph{IEEE Trans. Mach. Learn. Commun. Netw.},
  vol.~2, pp. 457--478, 2024.

\bibitem{MicForCenTom:J22}
G.~Michieletto, F.~Formaggio, A.~Cenedese, and S.~Tomasin, ``Robust
  localization secure navigation {UAV} formations {GNSS} spoofing attack,''
  \emph{IEEE Trans. Autom. Sci. Eng.}, 2022.

\bibitem{BaiSunDemFen:J24}
L.~Bai, C.~Sun, A.~G. Dempster, W.~Feng, and C.~Zhuang, ``Robust {GNSS}
  spoofing detection {UE} maneuver {GNSS-5G} {mmWave} hybrid positioning
  system,'' \emph{IEEE Sens. J.}, vol.~24, no.~13, pp. 21\,237--21\,253, 2024.

\bibitem{RotCheLoWal:J21}
F.~Rothmaier, Y.-H. Chen, S.~Lo, and T.~Walter, ``A framework {GNSS} spoofing
  detection combinations metrics,'' \emph{IEEE Trans. Aerosp. Electron. Syst.},
  vol.~57, no.~6, pp. 3633--3647, 2021.

\bibitem{Gow:J20}
\BIBentryALTinterwordspacing
D.~Goward, ``{GPS} backup demonstration projects explained,'' \emph{GPS World},
  2022. [Online]. Available:
  \url{https://www.gpsworld.com/gps-backup-demonstration-projects-explained/}
\BIBentrySTDinterwordspacing

\bibitem{VoDe:J15}
Q.~D. Vo and P.~De, ``A survey fingerprint-based outdoor localization,''
  \emph{IEEE Commun. Surv. Tutor.}, vol.~18, no.~1, pp. 491--506, 2015.

\bibitem{XinChe:J23}
Z.~Xing and J.~Chen, ``Constructing indoor region-based radio map without
  location labels,'' \emph{arXiv preprint arXiv:2308.16759}, 2023.

\bibitem{HuaLiuJiaDai:J23}
S.~Huai, X.~Liu, Y.~Jiang, Y.~Dai, X.~Wang, and Q.~Hu, ``Multifeature-based
  outdoor fingerprint localization accuracy enhancement cellular network,''
  \emph{IEEE Trans. Instrum. Meas.}, vol.~72, pp. 1--15, 2023.

\bibitem{LaoMorKimLee:J18}
C.~Laoudias, A.~Moreira, S.~Kim, S.~Lee, L.~Wirola, and C.~Fischione, ``A
  survey enabling technologies network localization, tracking, navigation,''
  \emph{IEEE Commun. Surv. Tutor.}, vol.~20, no.~4, pp. 3607--3644, 2018.

\bibitem{Mozilla2023}
Mozilla, ``Ichnaea,''
  \url{https://github.com/mozilla/ichnaea/blob/main/ichnaea/api/locate/mac.py},
  2023.

\bibitem{BaiSunDemZha:J22}
L.~Bai, C.~Sun, A.~G. Dempster, H.~Zhao, J.~W. Cheong, and W.~Feng, ``{GNSS-5G}
  hybrid positioning based multi-rate measurements fusion proactive measurement
  uncertainty prediction,'' \emph{IEEE Trans. Instrum. Meas.}, vol.~71, pp.
  1--15, 2022.

\bibitem{SonDinZhaChe:J25}
W.~Song, H.~Ding, L.~Zhang, L.~Chen, W.~Guo, W.~Lin, and S.~Qi, ``Performance
  verification {GNSS/5G} tightly coupled fusion positioning urban occluded
  environments smartphone,'' \emph{GPS Solutions}, vol.~29, no.~1, p.~40, 2025.

\bibitem{MagLunPat:J15}
B.~Mager, P.~Lundrigan, and N.~Patwari, ``Fingerprint-based device-free
  localization performance changing environments,'' \emph{IEEE J. Sel. Areas
  Commun.}, vol.~33, no.~11, pp. 2429--2438, 2015.

\bibitem{ShaKas:J21}
K.~Shamaei and Z.~M. Kassas, ``Receiver design time arrival estimation
  opportunistic localization {5G} signals,'' \emph{IEEE Trans. Wirel. Commun.},
  vol.~20, no.~7, pp. 4716--4731, 2021.

\bibitem{MiuHsuCheKam:J15}
S.~Miura, L.-T. Hsu, F.~Chen, and S.~Kamijo, ``{GPS} error correction
  pseudorange evaluation using three-dimensional maps,'' \emph{IEEE Trans.
  Intell. Transp. Syst.}, vol.~16, no.~6, pp. 3104--3115, 2015.

\bibitem{JanTipPop:C16}
K.~Jansen, N.~O. Tippenhauer, and C.~P{\"o}pper, ``Multi-receiver {GPS}
  spoofing detection: Error models realization,'' in \emph{Proc. 32nd ACSAC},
  Los Angeles, CA, USA, Dec. 2016.

\bibitem{SchGriTreHof:C03}
A.~Schwaighofer, M.~Grigora{\c{s}}, V.~Tresp, and C.~Hoffmann, ``{GPPS}:
  Gaussian process positioning system cellular networks,'' in \emph{Proc. 16th
  NIPS}, Whistler, British Columbia, Canada, Dec. 2003.

\bibitem{NarRanNou:C19}
S.~Narain, A.~Ranganathan, and G.~Noubir, ``Security {GPS/INS} based on-road
  location tracking systems,'' in \emph{Proc. IEEE S\&P}, San Francisco, CA,
  USA, May 2019.

\bibitem{Fan:b96}
J.~Fan, \emph{Local Polynomial Modelling Its Applications: {M}onographs
  Statistics Applied Probability 66}.\hskip 1em plus 0.5em minus 0.4em\relax
  Routledge, 1996.

\bibitem{Jam:J24}
\BIBentryALTinterwordspacing
Jammertest, ``The world's largest open jamming spoofing test,''
  \emph{Jammertest}, 2024. [Online]. Available:
  \url{https://jammertest.no/about-2/}
\BIBentrySTDinterwordspacing

\bibitem{Unw:J23}
\BIBentryALTinterwordspacing
UnwiredLabs, ``The world's largest open database cell towers,''
  \emph{OpenCelliD}, 2023. [Online]. Available: \url{https://opencellid.org/}
\BIBentrySTDinterwordspacing

\bibitem{BobArkUht:J23}
\BIBentryALTinterwordspacing
Bobzilla, Arkasha, and Uhtu, ``Wigle.net. all networks. found everyone.''
  \emph{WiGLE}, 2023. [Online]. Available: \url{https://wigle.net/}
\BIBentrySTDinterwordspacing

\bibitem{TR36814}
\BIBentryALTinterwordspacing
``Evolved universal terrestrial radio access ({E-UTRA}); further advancements
  for {E-UTRA} physical layer aspects,'' 3GPP, Tech. Rep. TR 36.814 (Release
  9), Mar. 2017. [Online]. Available:
  \url{https://portal.3gpp.org/desktopmodules/Specifications/SpecificationDetails.aspx?specificationId=2493}
\BIBentrySTDinterwordspacing

\bibitem{KuuFalKatDia:J18}
S.~Kuutti, S.~Fallah, K.~Katsaros, M.~Dianati, F.~Mccullough, and
  A.~Mouzakitis, ``A survey state-of-the-art localization techniques their
  potentials autonomous vehicle applications,'' \emph{IEEE Internet Things J.},
  vol.~5, no.~2, pp. 829--846, 2018.

\bibitem{KokHolSch:J18}
M.~Kok, J.~D. Hol, and T.~B. Sch{\"o}n, ``Using inertial sensors position
  orientation estimation,'' \emph{arXiv preprint arXiv:1704.06053}, 2018.

\bibitem{Particle2019}
A.~Domi, ``Particle filter,''
  \url{https://github.com/ArmandoDomi/Particle_Filter/}, 2020.

\end{thebibliography}

\end{document}